\documentclass[12pt, a4paper, twoside]{scrartcl}
\pdfoutput=1
\usepackage[disable]{todonotes}

\usepackage[tracking=true,expansion=true,stretch=15,shrink=15]{microtype}
\DeclareMicrotypeSet*[tracking]{my}
{ font = */*/*/sc/* } 
\SetTracking{encoding = *, shape = sc }{ 45 }
\SetProtrusion{encoding={*},family={bch},series={*},size={6,7}}
{1={ ,750},2={ ,500},3={ ,500},4={ ,500},5={ ,500},
	6={ ,500},7={ ,600},8={ ,500},9={ ,500},0={ ,500}}
\SetExtraKerning[unit=space]
{encoding={*}, family={qhv}, series={b}, size={large,Large}}
{1={-200,-200}, 
	\textendash={400,400}}

\usepackage{setspace}

\usepackage{url}
\usepackage{latexsym}
\usepackage{amssymb}
\usepackage{amsmath}
\usepackage{amsthm}
\usepackage{dsfont}
\usepackage{mathrsfs}
\usepackage{empheq}
\usepackage{bm}
\usepackage{shadethm}
\usepackage{subfig}
\usepackage{enumitem}
\usepackage{graphicx}
\usepackage{float} 
\usepackage{tikz}
\usetikzlibrary{shapes,arrows,decorations.pathmorphing,graphs,positioning,backgrounds}
\usepackage{algorithm}
\usepackage{algpseudocode}
\usepackage[square,comma, numbers]{natbib}
\bibliographystyle{apalike}

\usepackage[automark, headsepline]{scrpage2}
\usepackage{hyperref}
\usepackage{nameref}
\definecolor{TUMblue}{cmyk}{1, .54, .04, .19}
\hypersetup{
	colorlinks   = true, 
	urlcolor     = TUMblue, 
	linkcolor    = TUMblue, 
	citecolor   = TUMblue 
}

\usepackage[
left=1.3in,
right=1.3in,
top=0.7in,
bottom=0.7in,
includeheadfoot
]{geometry}


\newtheoremstyle{new}{12pt}{12pt}{\itshape}{}{\bfseries}{.}{1em}{}
\theoremstyle{new}
\newtheorem{Theorem}{Theorem}
\newtheorem{Corollary}{Corollary}
\newtheorem{Proposition}{Proposition}
\newtheorem{Lemma}{Lemma}

\newtheorem{Example}{Example}
\newtheorem{Remark}{Remark}
\newtheorem{Assumption}{Assumption}

\newcommand{\ind}{\mathds{1}}
\newcommand{\R}{\mathds{R}}

\newcommand{\E}{\mathrm{E}}
\newcommand{\Prob}{\mathrm{Pr}}
\newcommand{\var}{{\mathrm{var}}}

\newcommand{\cov}{{\mathrm{cov}}}

\newcommand{\wh}{\widehat}

\allowdisplaybreaks
\raggedbottom
\usepackage[left]{lineno}
\newcommand*\patchAmsMathEnvironmentForLineno[1]{%
  \expandafter\let\csname old#1\expandafter\endcsname\csname #1\endcsname
  \expandafter\let\csname oldend#1\expandafter\endcsname\csname end#1\endcsname
  \renewenvironment{#1}%
     {\linenomath\csname old#1\endcsname}%
     {\csname oldend#1\endcsname\endlinenomath}}%
\newcommand*\patchBothAmsMathEnvironmentsForLineno[1]{%
  \patchAmsMathEnvironmentForLineno{#1}%
  \patchAmsMathEnvironmentForLineno{#1*}}%
\AtBeginDocument{%
\patchBothAmsMathEnvironmentsForLineno{equation}%
\patchBothAmsMathEnvironmentsForLineno{align}%
\patchBothAmsMathEnvironmentsForLineno{flalign}%
\patchBothAmsMathEnvironmentsForLineno{alignat}%
\patchBothAmsMathEnvironmentsForLineno{gather}%
\patchBothAmsMathEnvironmentsForLineno{multline}%
}


\begin{document}

	\pagestyle{scrheadings}
	\clearscrheadings
	\lohead{Nonparametric density estimation with simplified vine copulas}
	\rohead{\pagemark}
	\lehead{Thomas Nagler and Claudia Czado}
	\rehead{\pagemark}

	\setcounter{page}{1}
	\pagenumbering{arabic}  
	
\author{Thomas Nagler\footnote{Corresponding author, Department of Mathematics, Technische Universit{\"a}t M{\"u}nchen, Boltzmanstra{\ss}e 3,  85748 Garching (email: \href{mailto:thomas.nagler@tum.de}{thomas.nagler@tum.de})} \, and Claudia Czado\footnote{Department of Mathematics, Technische Universit{\"at} M{\"u}nchen,  Boltzmannstra{\ss}e 3, 85748 Garching Germany (email: \href{mailto:cczado@ma.tum.de}{cczado@ma.tum.de})}}
\title{Evading the curse of dimensionality in nonparametric density estimation with simplified vine copulas}
\date{\hspace{3pt} \normalsize\today}

\maketitle

\begin{abstract} 
\noindent {\bfseries \sffamily Abstract}\\
Practical applications of nonparametric density estimators in more than three dimensions suffer a great deal from the well-known curse of dimensionality: convergence slows down as dimension increases. We show that one can evade the curse of dimensionality by assuming a simplified vine copula model for the dependence between variables. We formulate a general nonparametric estimator for such a model and show under high-level assumptions that the speed of convergence is independent of dimension. We further discuss a particular implementation for which we validate the high-level assumptions and establish its asymptotic normality.
Simulation experiments illustrate a large gain in finite sample performance when the simplifying assumption is at least approximately true. But even when it is severely violated, the vine copula based approach proves advantageous as soon as more than a few variables are involved. Lastly, we give an application of the estimator to a classification problem from astrophysics.  \\[12pt]
	{\itshape Keywords: Classification, copula, dependence, kernel density estimation, pair-copula construction, vine copula}
\end{abstract}
	
	
	\section{Introduction}

Density estimation is one of the most important problems in nonparametric statistics. Most commonly, nonparametric density estimators are used for exploratory data analysis, but find many further applications in fields such as astrophysics, forensics, or biology \citep{Bock04, Aitken2004, Kie2010}. Many of these applications involve the estimation of multivariate densities. However, most applications so far focus on two- or three-dimensional problems. Furthermore, the persistent interest amongst practitioners is contrasted by a falling tide of methodological contributions in the last two decades. 

A probable reason is the prevalence of the \emph{curse of dimensionality}: due to sparseness of the data, nonparametric density estimators converge more slowly to the true density as dimension increases. Put differently, the number of observations required for sufficiently accurate estimates grows excessively with the dimension. As a result, there is very little benefit from the ever-growing sample sizes in modern data. Section 7.2 in \citep{Scott08} illustrates this phenomenon for a kernel density estimator when the standard Gaussian is the target density: to achieve an accuracy comparable to $n=50$ observations in one dimension, more then $n=10^6$ observations are required in ten dimensions.

In general, this issue cannot be solved: \citet{Stone80} proved that any estimator $\wh f$ that is consistent for the class of $p$ times continuously differentiable $d$-dimensional density functions converges at a rate of at most $n^{-p/(2p + d)}$.  More precisely,
\begin{align*}
	\widehat f(\bm x) = f(\bm x) + O_p(n^{-r}),
\end{align*}
for all densities $f$ of this class and some $r>0$, implies that $r \le p/(2p + d)$. The curse of dimensionality manifests itself in the  $d$ in the denominator. It implies that the optimal convergence rate necessarily decreases in higher dimensions. Thus, to evade the curse of dimensionality, all we can hope for is to find subclasses of densities for which the optimal convergence rate does not depend on $d$. One such subclass is the density functions corresponding to independent variables, which can be estimated as a simple product of univariate density estimates. But the independence assumption is very restrictive. We also want the subclass to be rich and flexible. We will show that simplified vine densities are such a class and provide a useful approximation even when the simplifying assumption is severely violated.

\subsection{Nonparametric density estimation based on simplified vine copulas}

We introduce a  nonparametric density estimator whose convergence speed is independent of the dimension. The estimator is build on the foundation of a simplified vine copula model, where the joint density is decomposed into a product of marginal densities and bivariate copula densities, see, e.g., \citep{Czado10} and Section 3.9 in \citep{Joe14}.

First, we separate the marginal densities and the copula density (which captures the dependence between variables). Let $(X_1, \dots, X_d) \in \mathds{R}^d$ be a random vector with joint distribution $F$ and marginal distributions $F_1, \dots F_d$. Provided densities exist, Sklar's Theorem \citep{Sklar59} allows us to rewrite the joint density $f$ as the product of a copula density $c$ and the marginal densities $f_1, \dots, f_d$: for all $\bm x \in \mathds{R}^d$,
\begin{align*}
	f(\bm x) = c\bigl\{F_1(x_1), \dots, F_d(x_d)\bigr\} \times f_1(x_1) \times \dots \times f_d(x_d), 
\end{align*}
where $c$ is the density of the random vector $\bigl(F_1(X_1), \dots, F_d(X_d)\bigr) \in [0,1]^d$. In order to estimate the joint density $f$, we can therefore obtain estimates of the marginal densities $f_1, \dots, f_d$ and the copula density $c$ separately, and then plug them into the above formula. With respect to the curse of dimensionality, nothing is gained (so far) since  estimation of the copula density is still a $d$-dimensional problem.

A crucial insight is that any $d$-dimensional copula density can be decomposed into a product of $d(d-1)/2$ bivariate (conditional) copula densities \citep{Bedford01}. Equivalently, one can build arbitrary $d$-dimensional copula densities by using  $d(d-1)/2$ building blocks (so-called \emph{pair-copulas}).  Following this idea,  the flexible class of \emph{vine copula} models --- also known as \emph{pair-copula-constructions (PCCs)} --- were introduced in \citep{Aas09} and  have seen rapidly increasing interest in recent years. For instance, a three-dimensional joint density can be decomposed as
\begin{align*}
f(x_1, x_2, x_3) &= c_{1,2}\bigl\{F_1(x_1), F_2(x_2)\bigr\}\times c_{2,3}\bigl\{F_2(x_2), F_3(x_3)\bigr\}  \\
&\phantom{=} \times c_{1,3 ;2}\bigl\{F_{1|2}(x_1|x_2), F_{3|2}(x_3|x_2) \, ; \, x_2\bigr\} \\
&\phantom{=} \times f_1(x_1) \times f_2(x_2) \times f_3(x_3),
\end{align*}
where $c_{1,3;2}\{F_{1|2}(x_1|x_2), F_{3|2}(x_3|x_2) \, ; \, x_2\}$ is the joint density corresponding to the conditional random vector $\bigl(F_{1|2}(X_1|X_2), F_{3|2}(X_3|X_2)\bigr) \big| X_2 = x_2$. Note that the copula of the vector depends on the value $x_2$ of the conditioning variable $X_2$. To reduce the complexity of the model, it is usually assumed that the influence of the conditioning variable on the copula can be ignored. In this case, the conditional density $c_{1,3;2}$ collapses to an unconditional --- and most importantly, two-dimensional --- object, and one speaks of the \emph{simplifying assumption} or a simplified vine copula model/PCC. For general dimension $d$, a similar decomposition into the product of $d$ marginal densities and $d(d-1)/2$ pair-copula densities holds.

Some copula classes where the simplifying assumption is satisfied are given in \citep{Stoeber13}. An important special case is the Gaussian copula. It is the dependence structure underlying a multivariate Gaussian distribution and can be fully characterized by $d(d-1)/2$ partial correlations. Note that under a multivariate Gaussian model, conditional correlations and partial correlations coincide. This property is in direct correspondence to the simplifying assumption which states that all conditional copulas collapse to partial copulas. When the Gaussian copula is represented as a vine copula, it consists of $d(d-1)/2$ Gaussian pair-copulas where the copula parameter of each pair corresponds to the associated partial correlation. In a general simplified vine copula model, we replace each Gaussian pair-copula by an arbitrary bivariate copula. Such models are extremely flexible and encompass a wide range of dependence structures. The class of simplified vine distributions is even more flexible, because it allows to couple a simplified vine copula model with arbitrary marginal distributions.

Under the simplifying assumption, a $d$-dimensional copula density can be decomposed into $d(d-1)/2$ unconditional bivariate densities. Consequently, the estimation of a $d$-dimensional copula density can subdivided into the estimation of $d(d-1)/2$ two-dimensional copula densities. Intuitively, we expect that the convergence rate of such an estimator will be equal to the rate of a two-dimensional estimator and, thus, there is no curse of dimensionality. This is formally established by our main result: \autoref{Theory:rate_thm}.

Nonparametric estimation of simplified vine copula densities has been discussed earlier using kernels \citep{Lopez13} and  smoothing splines \citep{Kauermann14}. However, both contributions lack an analysis of the asymptotic behavior of the estimators.   We treat the more general setting of densities with arbitrary support. \autoref{Theory:rate_thm} shows under high-level conditions that the convergence rate of a nonparametric estimator of a simplified vine density is independent of the dimension --- an extremely powerful property that has been overlooked so far.

\subsection{Organization}

The remainder is structured as follows: \autoref{Vines} gives a review of vine copulas and introduces notation. A general nonparametric estimator of simplified vine densities is described in detail in \autoref{RVineKDE}. In \autoref{Theory} we show under high-level assumptions that such an estimator is consistent and that the convergence rate is independent of the dimension. Hence, there is no curse of dimensionality.  In \autoref{Practical} we discuss how the method can be implemented as a kernel estimator. For this particular implementation, we validate the high-level assumptions of \autoref{Theory:rate_thm} and establish asymptotic normality.  We illustrate its advantages via simulations in the simplified as well as non-simplified setting  (\autoref{Simulations}). The method is applied to a classification problem from astrophysics in \autoref{Application}. We conclude with a discussion of our results and provide links to the existing literature on the simplifying assumption in \autoref{Conclusion}.
	\section{Simplified vine copulas and distributions} \label{Vines}

We will briefly recall the most important facts about vine copulas and the closely related vine distributions. For a more extensive introduction we refer to \citep{Aas09, Czado10} and Chapter 3 of \citep{Joe14}.

Vine copula models follow the idea of \citet{Joe96} that any $d$-dimensional copula can be expressed in terms of $d(d-1)/2$ bivariate (conditional) copulas. Because such a decomposition is not unique, \citep{Bedford02} introduces a graphical method to organize the structure of a $d$-dimensional  vine copula in terms of linked trees $T_m=(V_m, E_m)$, $m=1, \dots, d-1$. 
A sequence $\mathcal{V}:= (T_1, \dots, T_{d-1})$ of trees is called a \emph{regular vine (R-vine) tree sequence} on $d$ elements if the following conditions are satisfied:
	\begin{enumerate}
		\item $T_1$ is a tree with nodes $V_1=\{1, \dots, d\}$ and edges $E_1$.
		\item For $m\ge 2$, $T_m$ is a tree with nodes $V_m=E_{m-1}$ and edges $E_m$.
		\item (\emph{Proximity condition}) Whenever two nodes in $T_{m+1}$ are joined by an edge, the corresponding edges in $T_m$ must share a common node.
	\end{enumerate} 
    
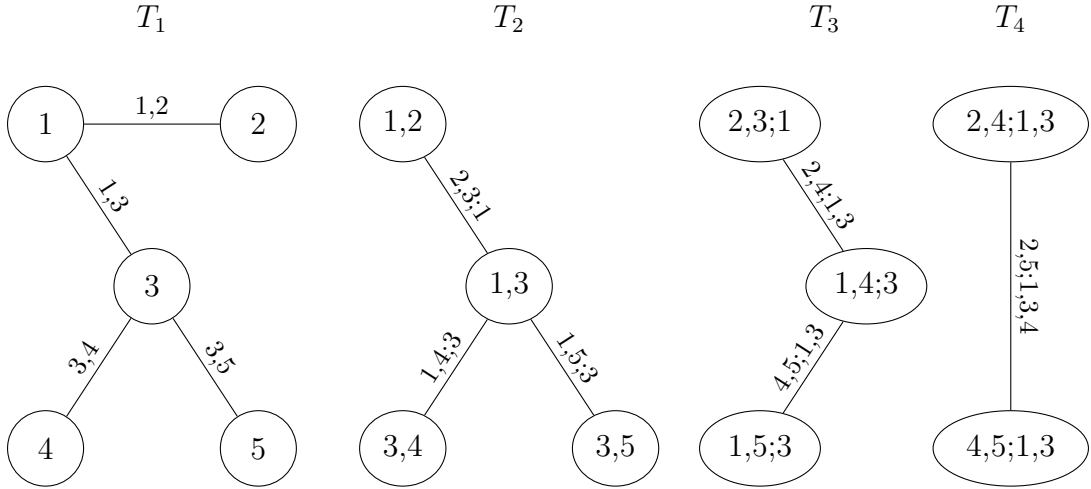
\begin{figure}
\tikzstyle{VineNode} = [ellipse, fill = white, draw = black, text = black, align = center, minimum height = 1cm, minimum width = 1cm]
\tikzstyle{DummyNode}  = [draw = none, fill = none, text = white] 
\tikzstyle{TreeLabels} = [draw = none, fill = none, text = black] 
\newcommand{\labelsize}{\footnotesize} 
\newcommand{\yshiftLabel}{-0.3cm}  
\newcommand{\yshiftNodes}{-0.75cm} 
\newcommand{\xshiftTree}{0.5cm}    
\newcommand{\rotateLabels}{-57}    
	\centering
	\begin{tikzpicture}	[every node/.style = VineNode, node distance =1.4cm] 
	\node (1){1}
	node[DummyNode]  (Dummy12)   [right of = 1]{} 
	node             (2)         [right of = Dummy12] {2}
	node             (3)         [below of = Dummy12, yshift = \yshiftNodes] {3}
	node[DummyNode]  (Dummy45)   [below of = 3, yshift = \yshiftNodes]{}
	node             (4)         [left of = Dummy45] {4}
    node             (5)         [right of = Dummy45] {5}
    node             (12)        [right of = 2, xshift = \xshiftTree] {1,2} %
    node[DummyNode]  (Dummy12x)  [right of = 12]{} 
	node             (13)        [below of = Dummy12x, yshift = \yshiftNodes] {1,3}
	node[DummyNode]  (Dummy45c3) [below of = 13, yshift = \yshiftNodes]{} 
	node             (34)        [left  of = Dummy45c3] {3,4}
    node             (35)        [right of = Dummy45c3] {3,5}
    node             (15c3)      [right of = 35, xshift = \xshiftTree] {1,5;3}
	node[DummyNode]  (Dummy15c3x)[right of = 15c3]{}
	node             (14c3)      [above of = Dummy15c3x, yshift = -\yshiftNodes] {1,4;3}
	node[DummyNode]  (Dummy23c1x)[above of = 14c3, yshift = -\yshiftNodes]{}
	node             (23c1)      [left of  = Dummy23c1x] {2,3;1}	
	node             (24c13)     [right of = Dummy23c1x, xshift = \xshiftTree] {2,4;1,3}
    node             (45c13)     [right of = Dummy15c3x, xshift = \xshiftTree] {4,5;1,3}
	node[TreeLabels] (T1)        [above of = Dummy12] {$T_1$}
	node[TreeLabels] (T2)        [above of = Dummy12x] {$T_2$}
	node[TreeLabels] (T3)        [above of = 23c1] {\hspace{1.7cm}$T_3$} 
	node[TreeLabels] (T4)        [above of = 24c13] {$T_4$}	 
	;	    	
	\draw (1) to node[draw=none, fill = none, font = \labelsize,
	                    above, yshift = \yshiftLabel] {1,2} (2);
	\draw (1) to node[draw=none, fill = none, font = \labelsize, 
	                    rotate = \rotateLabels, above, yshift = \yshiftLabel] {1,3} (3);   
	\draw (3) to node[draw=none, fill = none, font = \labelsize, 
	                    rotate = \rotateLabels, above, yshift = \yshiftLabel] {3,5} (5);  
	\draw (3) to node[draw=none, fill = none, font = \labelsize, 
	                    rotate = -\rotateLabels, above, yshift = \yshiftLabel] {3,4} (4); 
	\draw (12) to node[draw=none, fill = none, font = \labelsize, above, 
	                    rotate = \rotateLabels, above, yshift = \yshiftLabel] {2,3;1} (13);   
	\draw (13) to node[draw=none, fill = none, font = \labelsize, above, 
	                    rotate = \rotateLabels, above, yshift = \yshiftLabel] {1,5;3} (35); 
	\draw (13) to node[draw=none, fill = none, font = \labelsize, above, 
	                    rotate = -\rotateLabels, above, yshift = \yshiftLabel] {1,4;3} (34); 
	\draw (23c1) to node[draw=none, fill = none, font = \labelsize, above, 
	                    rotate = \rotateLabels, above, yshift = \yshiftLabel] {2,4;1,3} (14c3);   
	\draw (14c3) to node[draw=none, fill = none, font = \labelsize, above, 
	                    rotate = -\rotateLabels, above, yshift = \yshiftLabel] {4,5;1,3} (15c3); 
	\draw (24c13) to node[draw=none, fill = none, font = \labelsize, above, 
	                    rotate = -90, above, yshift = \yshiftLabel] {2,5;1,3,4} (45c13);
	\end{tikzpicture}
	\caption{Example of a regular vine tree sequence.}
	\label{Vines:RVine_fig}
\end{figure}

The tree sequence is also called the \emph{structure} of the vine. An example of an R-vine tree sequence for $d=5$ is given in \autoref{Vines:RVine_fig}. For the annotation of the edges in each tree we follow \citep{Czado10}. 

An \emph{R-vine copula} model identifies each edge of the trees with a bivariate copula (a so-called \emph{pair-copula}). Assume that each pair-copula admits a density and let $\mathcal{B}:=\{c_{j_e, k_e ; D_e} | e \in E_m, 1 \le m \le d-1\}$ be the set of copula densities associated with the edges in $\mathcal{V}$. Then, the R-vine copula density can be written as
\begin{align}
c(\bm u) = \prod_{m=1}^{d-1} \prod_{e \in E_m} c_{j_e, k_e; D_e} \bigl\{G_{j_e|D_e}(u_{j_e}|\bm u_{D_e}), \, G_{k_e|D_e}(u_{k_e}|\bm u_{D_e}) ; \, \bm u_{D_e} \bigr\}, \label{Vines:density_nonsimplified_eq}
\end{align}
where $\bm u_{D_e}:=(u_\ell)_{\ell \in D_e}$ is a subvector of $\bm u =(u_1, \dots, u_d) \in [0,1]^d$ and $G_{j_e|D_e}$ is the conditional distribution of $U_{j_e} | \bm U_{D_e} = \bm u_{D_e}$. The set $D_e$ is called \emph{conditioning set} and the indices $j_e, k_e$ form the \emph{conditioned set}. In the first tree the conditioning set $D_e$ is empty, and we define $G_{j_e}(u_{j_e}) := u_{j_e}, G_{k_e}(u_{k_e}) := u_{k_e}$ for notational consistency. For a given edge $e$, the function $c_{j_e, k_e ; D_e}$ is the  copula density associated with the conditional random vector 
\begin{align*}
\bigl(G_{j_e|D_e}(U_{j_e}|\bm U_{D_e}),G_{k_e|D_e}(U_{k_e}|\bm U_{D_e})\bigr)\bigl| \bm U_{D_e} = \bm u_{D_e}. 
\end{align*}

Note that in \eqref{Vines:density_nonsimplified_eq}, the pair-copula density $c_{j_e, k_e; D_e}$ takes $\bm u_{D_e}$ as an argument and the functional form w.r.t.\ the arguments $u_{j_e}$,$u_{k_e}$ may be different for each value of $\bm u_{D_e}$. This conditional structure makes the model very complex and complicates estimation. To simplify matters, we assume that this dependence can be ignored and the copula is equal across all possible values of $\bm u_{D_e}$: we assume that the \emph{simplifying assumption} holds. In this case, \eqref{Vines:density_nonsimplified_eq} collapses to
\begin{align}
c(\bm u) = \prod_{m=1}^{d-1} \prod_{e \in E_m} c_{j_e, k_e; D_e} \bigl\{G_{j_e|D_e}(u_{j_e}|\bm u_{D_e}), \, G_{k_e|D_e}(u_{k_e}|\bm u_{D_e}) \bigr\}. \label{Vines:density_eq}
\end{align} 
A distribution whose copula density can be represented this way is called a \emph{simplified vine distribution}.
\begin{Example}   \label{Vines:vine_ex}
	The density of a simplified R-vine copula corresponding to the tree sequence in \autoref{Vines:RVine_fig} is
	\begin{align*}
	c(u_1, \dots, u_5) &= c_{1,2}(u_1, u_2) \times c_{1,3}(u_1,u_3) \times c_{3,4}(u_3,u_4) \times c_{3,5}(u_3,u_5)\\
	& \phantom{=} \times c_{2,3;1}(u_{2|1}, u_{3|1}) \times c_{1,4;3}(u_{1|3}, u_{4|3}) \times c_{1,5;3}(u_{1|3}, u_{5|3}) \\
	& \phantom{=} \times c_{2,4;1,3}(u_{2|1,3}, u_{4|1,3}) \times c_{4,5;1,3}(u_{4|1,3}, u_{5|1,3}) \\
	& \phantom{=} \times c_{2,5;1,3,4}(u_{2|1,3,4}, u_{5|1,3,4}),
	\end{align*}
	where we used the abbreviation $u_{j_e|D_e} := G_{j_e|D_e}(u_{j_e}|\bm u_{D_e})$.
\end{Example} \noindent
R-vine copula densities involve conditional distributions $G_{j_e|D_e}$. We can express them in terms of conditional distributions corresponding to bivariate copulas in $\mathcal{B}$ as follows: Let $\ell_e \in D_e$ be another index such that $c_{j_e, \ell_e; D_e \setminus \ell_e} \in \mathcal{B}$ and define $D'_e:= D _e \setminus \ell_e$. Then, we can write
\begin{align}
\begin{aligned}
& \; G_{j_e|D_e}(u_{j_e}|\bm u_{D_e}) = h_{j_e|\ell_e;D'_e }\bigl\{G_{j_e|D'_e}(u_{j_e}|\bm u_{D'_e})\, \big| \, G_{\ell_e|D'_e}(u_{\ell_e}|\bm u_{D'_e})\bigr\}, 
\end{aligned}
 \label{Vines:h_recursive_eq}
\end{align}
where the \emph{h-function} is defined as
\begin{align}
h_{j_e|\ell_e;D'_e} (u | v) := \int_{0}^{u} c_{j_e, \ell_e;D'_e}(s , v) ds, \qquad \mbox{for } (u,v) \in [0,1]^2. \label{Vines:hfuncdef_eq}
\end{align}
By definition, h-functions are conditional distribution functions for pairs of marginally uniformly distributed random variables with joint density $c_{j_e, \ell_e;D'_e}$. The arguments $G_{j_e|D'_e}(u_{j_e}|\bm u_{D'_e})$ and $G_{\ell_e|D'_e}(u_{\ell_e}|\bm u_{D'_e})$ of the h-function in \eqref{Vines:h_recursive_eq} can be rewritten in the same manner. In each step of this recursion the conditioning set $D_e$ is reduced by one element. Note also that, by construction, the copula density on the right hand side of \eqref{Vines:hfuncdef_eq} always belongs to the set $\mathcal{B}$. Eventually, this allows us to write any of the conditional distributions $G_{j_e|D_e}$ as a recursion over h-functions that are directly linked to the pair-copula densities. Later, we will use this fact to derive estimates of such conditional distributions from estimates of the pair-copula densities in lower trees.
\begin{Example}
	Consider an R-vine copula corresponding to the R-vine tree sequence given in \autoref{Vines:RVine_fig}. We have 
	\begin{align*}
	G_{3|1,2}(u_3| u_1, u_2) &=  h_{3|2;1}\bigl\{h_{3|1}(u_3|u_1) \big| h_{2|1}(u_2|u_1) \bigr\},
	\end{align*}	
	where $h_{3|2;1}(u_{3|1}|u_{2|1}) = \int_{0}^{u_{3|1}} c_{2,3;1}(u_{2|1}, s) ds$, $h_{3|1}(u_3|u_1) = \int_{0}^{u_3} c_{1,3}(u_1, s) ds$, and $h_{2|1}(u_2|u_1) = \int_0^{u_2} c_{1,2}(u_1, s) ds$.
\end{Example} \noindent
Altogether, we can express any vine copula density in terms of bivariate copula densities and corresponding h-functions.
	\section{A nonparametric density estimator based on simplifed vine copulas} \label{RVineKDE}

We propose a multivariate nonparametric density estimation technique where a) we separate the estimation of marginal and copula densities, and b) the copula density is estimated as the product of sequentially estimated pair-copula densities. We suggest a general step-wise estimation algorithm without specifying exactly how the components are estimated. This more practical issue is deferred to \autoref{Practical}.

Let $\bm X = (X_1, \dots, X_d) \in \Omega_{\bm X}$ be a random vector with continuous joint distribution $F$ and  marginal distributions $F_1, \dots, F_d$. The support of $X_\ell$ will be denoted as $\Omega_{X_\ell}$, $\ell = 1, \dots, d$. Let further $\bm X^{(i)} = (X_1^{(i)}, \dots, X_d^{(i)})$, $i=1, \dots, n$, be \emph{iid} copies  of $\bm X$ (acting as observations).  Assume that $F$ is a simplified vine distribution with structure  $\mathcal{V} =(T_1, \dots, T_{d-1})$. Provided densities exist, we can use Sklar's theorem and \eqref{Vines:density_eq} to write the joint density $f$ for all $\bm x = (x_1, \dots, x_d) \in \Omega_{\bm X}$ as
\begin{align} \label{RVineKDE:Rvinedensity_eq}
f(\bm x) &=  c\bigl\{F_1(x_1), \dots, F_d(x_d)\bigr\} \times \prod_{l=1}^d f_\ell(x_\ell) \notag \\
\begin{split}
&= \prod_{m=1}^{d-1} \prod_{e \in E_m} c_{j_e, k_e; D_e} \bigl\{F_{j_e|D_e}(x_{j_e}|\bm x_{D_e}), \, F_{k_e|D_e}(x_{k_e}|\bm x_{D_e}) \bigr\}  \times \prod_{l=1}^d f_\ell(x_\ell).
\end{split}
\end{align}
The conditional distribution functions $F_{k_e|D_e}(x_{k_e}|\bm x_{D_e})$ can equivalently be expressed as $G_{k_e|D_e}(u_{k_e}|\bm u_{D_e})$, where $\bm u = (u_1, \dots, u_d) := (F_1(x_1), \dots, F_d(x_d))$. This allows us to decompose  $F_{k_e|D_e}$ recursively into h-functions (see \autoref{Vines}).

The idea is now to estimate all functions in the above expression separately. We use a step-wise estimation procedure that is widely used in vine copula models, see, e.g., \citep{Aas09, Haff13}. It is summarized in \autoref{RVineKDE:seqest_alg}. Let us describe the reasoning behind the first few steps in a little more detail.

\begin{enumerate}[label=\arabic*.]
	\item Based on the observations $(X_1^{(i)}, \dots, X_d^{(i)})$, $i =1, \dots, n,$ we obtain  estimates $\wh f_1, \dots, \wh f_d, \wh F_1, \dots, \wh F_d$ of the marginal densities $f_1, \dots, f_d$ and distribution functions $F_1, \dots, F_d$.

\item The copula density $c$ is the density of the random vector $\bm U :=\bigl(F_1(X_1), \dots, F_d(X_d)\bigr)$. We do not have access to observations from this vector. However, we can define pseudo-observations $\bm U^{(i)} :=\bigl(\wh U_1^{(i)}, \dots, \wh U_d^{(i)}\bigr)$ by replacing $F_1, \dots, F_d$ with the estimators from the last step:
	\begin{align}
		\bigl(\wh U_1^{(i)}, \dots, \wh U_d^{(i)}\bigr) := \bigl(\wh F_1(X_1^{(i)}), \dots, \wh F_d(X_d^{(i)})\bigr), \qquad  i = 1, \dots, n. \label{RVineKDE:pobs_eq}
	\end{align}
	Based on two-dimensional subvectors of the pseudo-observations  \eqref{RVineKDE:pobs_eq}, we estimate all pair-copula densities and h-functions that correspond to edges of the first tree (the conditioning sets $D_e$ are empty).  We use  \eqref{Vines:hfuncdef_eq} to derive estimates of the $h$-functions, that is
	\begin{align*}
	\wh h_{j_e|k_e} (u | v) := \int_{0}^{u} \wh c_{j_e, k_e}(s , v) ds, \qquad \mbox{for } (u,v) \in (0,1)^2. 
	\end{align*}
	Optionally, the h-functions can be estimated separately. However, this will typically lead to a density estimate that does not integrate to one.
    
	\item Any pair-copula density $c_{j_e, k_e; D_e}$ corresponding to an edge in the second tree is the density of a random vector $\bigl(F_{j_e|D_e}(X_{j_e}| X_{D_e}), F_{k_e|D_e}(X_{k_e}| X_{D_e})\bigr)$, $e \in E_2$. They are not observable, but  we can use pseudo-observations such as
	\begin{align*}
	\wh U_{j_e|D_e }^{(i)} := \wh F_{j_e|D_e}\bigl( X_{j_e}^{(i)}| X_{D_e}^{(i)}\bigr) =  \wh G_{j_e|D_e}\bigl(\wh U_{j_e}^{(i)}| \wh U_{D_e}^{(i)}\bigr)=  \wh h_{j_e|D_e}\bigl(\wh U_{j_e}^{(i)}| \wh U_{D_e}^{(i)}\bigr), 
	\end{align*}
	$i=1,\dots, n,$ instead. This allows us to obtain estimates  $\wh c_{j_e, k_e; D_e}$, $\wh h_{j_e| k_e; D_e}$, and $\wh h_{k_e| j_e; D_e}$.
	
	\item For estimation in the third tree, we need observations from random vectors such as
	\begin{align}
	U_{j_e|D_e }^{(i)} := F_{j_e|D_e}\bigl(X_{j_e}^{(i)}| {\bm{X}}_{D_e}^{(i)}\bigr), \label{RVineKDE:rvs3_eq}
	\end{align}
	$i=1,\dots, n$, $e \in E_3$. Recall from \autoref{Vines} that, by construction, we can find some edge $e^\prime \in E_2$ such that $j_{e^\prime} = j_e$ and $D_{e^\prime} \cup k_{e^\prime} = D_e$. Consequently, we can apply \eqref{Vines:h_recursive_eq} and approximate \eqref{RVineKDE:rvs3_eq} by the pseudo-observations
	\begin{align*} 
		\begin{aligned}
			\wh U_{j_e|D_e}^{(i)}  = \wh U_{j(e^\prime)|D(e^\prime) \cup k(e^\prime)}^{(i)} &:= \wh F_{j_{e^\prime}|D_{e^\prime} \cup k_{e^\prime}}\bigl(X_{j_{e^\prime}}^{(i)} \big| \bm X_{D_{e^\prime} \cup k_{e^\prime}}^{(i)} \bigr) \\
            &\phantom{:}=\wh G_{j_{e^\prime}|D_{e^\prime} \cup k_{e^\prime}}\bigl(\wh U_{j_{e^\prime}}^{(i)} \big| \wh{\bm U}_{D_{e^\prime} \cup k_{e^\prime}}^{(i)} \bigr) \\
			&\phantom{:}= \wh h_{j_{e^\prime}|k_{e^\prime} ; D_{e^\prime}} \bigl(\wh U_{j_{e^\prime}|D_{e^\prime}}^{(i)}\big| \wh{ U}_{k_{e^\prime} | D_{e^\prime}}^{(i)}\bigr),
		\end{aligned}
	\end{align*}
	where the last equality is again derived from \eqref{Vines:h_recursive_eq}.
	
	\item For higher trees, proceed as in step 4.\
\end{enumerate}
	 At the end of the procedure we have estimates for all marginal distributions/densities, bivariate copula densities, and all h-functions that are required to evaluate the R-vine density \eqref{RVineKDE:Rvinedensity_eq}. For all $\bm x \in \Omega_{\bm X}$ we now define an estimate of the simplified vine density $f$ as
	\begin{align} 
		\wh f_{\mathrm{vine}}(\bm x) &:= \prod_{m=1}^{d-1} \prod_{e \in E_m} \wh c_{j_e, k_e; D_e} \bigl\{\wh F_{j_e|D_e}( x_{j_e}|{\bm x}_{D_e}), \, \wh F_{k_e|D_e}(x_{k_e}|{\bm x}_{D_e}) \bigr\} \times \prod_{\ell=1}^d \wh f_\ell(x_\ell).  \label{RVineKDE:estimator_eq}
	\end{align}

\begin{algorithm}[t]
	\caption{Sequential estimation of simplified vine densities}
	\label{RVineKDE:seqest_alg}
	{\bfseries Input:} Observations $(X_1^{(i)}, \dots, X_d^{(i)})$, $i=1, \dots, n$, structure $\mathcal{V}=(T_1, \dots, T_{d-1})$. \\
	{\bfseries Output:} Estimates of all marginal densities and distributions, pair-copula densities, and h-functions required to evaluate the simplified vine density \eqref{RVineKDE:Rvinedensity_eq}.\\	---------------------------------------------------------------------------------------------------------\\
	{\bfseries for} $\ell=1, \dots, d$:\\
	\hspace*{2em} Obtain estimates $\wh f_\ell, \wh F_\ell$ of the marginal density $f_\ell$ and distribution $F_\ell$.\\
	\hspace*{2em} Set $\wh U_\ell^{(i)} := \wh F_\ell(X_\ell^{(i)}), i=1, \dots, n$.\\
	{\bfseries end for}\\
	{\bfseries for} $m=1, \dots, d-1$:\\
	\hspace*{2em} {\bfseries for all} $e \in E_m$:\\[-12pt]
	\hspace*{1em}
	\begin{minipage}[t]{0.9\textwidth}
		\begin{enumerate}
			\item {\itshape Estimation step:} Based on $\bigl(\wh U_{j_e|D_e}^{(i)}, \wh U_{k_e|D_e}^{(i)}\bigr)_{i=1,\dots,n}$, obtain an estimate of the copula density $c_{j_e, k_e;D_e}$  which we denote as $\wh c_{j_e, k_e;D_e}$, and corresponding h-function estimates $\wh h_{j_e| k_e;D_e}$, $\wh h_{k_e| j_e;D_e}$.
			\item {\itshape Transformation step:} Set
			\begin{align*}
			\wh U_{j_e|D_e \cup k_e}^{(i)} &:= \wh h_{j_e|k_e ; D_e}\bigl(\wh U_{j_e|D_e}^{(i)}\big| \wh{ U}_{k_e | D_e}^{(i)}\bigr), \\
			\wh U_{k_e|D_e \cup j_e}^{(i)} &:= \wh h_{k_e|j_e ; D_e}\bigl(\wh U_{k_e|D_e}^{(i)}\big| \wh{U}_{ j_e|D_e}^{(i)}\bigr), \quad i = 1, \dots, n.
			\end{align*}
		\end{enumerate}
	\end{minipage}
	\hspace*{2em} {\bfseries end for}\\
	{\bfseries end for}
\end{algorithm}
	\section{Asymptotic theory} \label{Theory}

We now establish weak consistency of the simplified vine density estimator proposed in \autoref{RVineKDE}. We furthermore show that its probabilistic convergence rate does not increase with dimension and, hence, there is no curse of dimensionality. 

\subsection{Consistency and rate of convergence} \label{Theory:rates}

The sequential nature of the proposed estimator complicates its analysis. Estimation errors will propagate from one tree to the next and affect the estimation in higher trees. We impose high-level assumptions on the uni- and bivariate estimators that allow us to establish our main result. 

The first assumption considers the consistency of univariate density and distribution function estimators. Although estimators may converge at different rates, we will formulate all assumptions w.r.t.\ to the same rate $n^{-r}$, $r > 0$. This rate then has to be the slowest among all estimators involved --- typically the rate of the pair-copula density estimator. 
\begin{Assumption} \label{Theory:f_ass}
For all $\ell = 1, \dots, d,$ and all $x_\ell \in \Omega_{X_\ell}$, it holds
\begin{align*}
		(a) \quad \wh f_\ell(x_\ell) - f_\ell(x_\ell) = O_p(n^{-r}), \qquad (b)  \quad \sup_{x_\ell\in \Omega_{X_\ell}} \bigl\vert\wh F_\ell(x_\ell) - F_\ell(x_\ell)\bigr\vert = o_{a.s.}(n^{-r}).
\end{align*}
\end{Assumption}\noindent
Next, assume  we are in an ideal situation where, for each edge $e \in E_m, m = 1, \dots, d-1$, we have access to the true (but unobservable) pair-copula samples
\begin{align} \label{Theory:unobs_eq}
\begin{aligned}
U_{j_e|D_e}^{(i)} := F_{j_e|D_e} \bigl(X_{j_e}^{(i)}| \bm X_{D_e}^{(i)}\bigr),  \qquad
U_{k_e|D_e}^{(i)} := F_{k_e|D_e}\bigl(X_{k_e}^{(i)}| \bm X_{D_e}^{(i)}\bigr),
\end{aligned} 
\end{align}
$i=1, \dots,n,$. Recall that estimators are functions of the data, although this dependence is usually not made explicit in notation. Denote
\begin{align}  \label{oracle_def}
\overline c_{j_e,k_e;D_e} (u,v) := \overline c_{j_e,k_e;D_e}\bigl(u, v, U_{j_e|D_e}^{(1)}, \dots, U_{k_e|D_e}^{(n)} \bigr)
\end{align}
 as the oracle pair-copula density estimator that is based on the random samples \eqref{Theory:unobs_eq}.  The  h-function estimators corresponding to \eqref{oracle_def} are denoted $\overline h_{j_e|k_e;D_e}$ and $\overline h_{k_e|j_e;D_e}$. The second assumption requires the pair-copula density and h-function estimators to be consistent in this ideal world. For the h-functions we need strong uniform consistency on compact interior subsets of $[0,1]^2$. We further assume that the errors from h-function estimation vanish faster than $n^{-r}$. 
\begin{Assumption} \label{Theory:oracle_ass}
	For all $e \in E_m, m = 1, \dots, d-1$, it holds:
    \begin{enumerate}
    \item[$(a)$] for all $(u,v) \in (0, 1)^2$,
    \begin{align*}
	 \overline c_{j_e,k_e;D_e}(u,v) - c_{j_e,k_e;D_e}(u,v)  = O_p(n^{-r}),
	\end{align*}
    
    \item[$(b)$] for every $\delta \in (0, 0.5]$,
    \begin{align*}
&  \sup_{(u, v) \in [\delta, 1-\delta]^2} \bigl\vert \overline h_{j_e|k_e;D_e}(u|v) - h_{j_e|k_e;D_e}(u|v) \bigr\vert = o_{a.s.}(n^{-r}), \\ 
&  \sup_{(u, v) \in [\delta, 1-\delta]^2}  \bigl\vert  \overline h_{k_e|j_e;D_e}(u|v) - h_{k_e|j_e;D_e}(u|v) \bigr\vert = o_{a.s.}(n^{-r}).
\end{align*}
\end{enumerate}
\end{Assumption}\noindent

In practice, one has to replace \eqref{Theory:unobs_eq} by pseudo-observations which have to be estimated. Thus, we only have access to perturbed versions of the random variables \eqref{Theory:unobs_eq}. Similar to a Lipschitz condition, the last assumption ensures that the pair-copula and h-function estimators are not overly sensitive to such perturbations. Denote 
\begin{align}  \label{feasible_def}
\wh c_{j_e,k_e;D_e} (u,v) := \overline c_{j_e,k_e;D_e}\bigl(u, v, \wh U_{j_e|D_e}^{(1)}, \dots, \wh U_{k_e|D_e}^{(n)} \bigr)
\end{align}
as the estimator based on pseudo-observations $\wh U_{j_e|D_e}^{(i)}, \wh U_{k_e|D_e}^{(i)}$ (as defined in \autoref{RVineKDE:seqest_alg}). The h-function estimators corresponding to \eqref{feasible_def} are denoted $\wh h_{j_e|k_e;D_e}$ and $\wh h_{k_e|j_e;D_e}$. 
\begin{Assumption} \label{Theory:bound_ass}
	For all $e \in E_m, m = 1, \dots, d-1$, it holds:
    \begin{enumerate}
	\item[$(a)$] for all $(u,v) \in (0,1)^2$,
    \begin{align*}
		 \wh c_{j_e,k_e;D_e}(u, v) - \overline c_{j_e,k_e;D_e}(u, v) = O_p(a_{e, n}),
    \end{align*}
 
    \item[$(b)$] for every $\delta \in (0, 0.5]$, 
	\begin{align*}
       & \sup_{(u, v) \in [\delta, 1-\delta]^2} \bigl\vert \wh h_{j_e|k_e;D_e}(u|v) - \overline h_{j_e|k_e;D_e}(u|v) \bigr\vert = O_{a.s.}(a_{e,n}), \\ 
     & \sup_{(u, v) \in [\delta, 1-\delta]^2}  \bigl\vert  \wh h_{k_e|j_e;D_e}(u|v) - \overline h_{k_e|j_e;D_e}(u|v)\bigr\vert = O_{a.s.}(a_{e,n}),
	\end{align*}
\end{enumerate}
where
    \begin{align*}
      a_{e, n} :=  \sup_{i = 1, \dots, n} \vert\wh U_{j_e|D_e}^{(i)} - U_{j_e|D_e}^{(i)}\bigr\vert +  \bigl\vert\wh U_{k_e|D_e}^{(i)} - U_{k_e|D_e}^{(i)}\bigr\vert.
	\end{align*}
\end{Assumption} \noindent
Finally, we require the true pair-copula densities to be smooth. Note that smoothness of pair-copula densities already guarantees smoothness of related h-functions by \eqref{Vines:hfuncdef_eq}. 

\begin{Assumption} \label{Theory:smoothness_ass}
For all $e \in E_m$, $m = 1, \dots, d-1$, the pair-copula densities $c_{j_e,k_e;D_e}$ are continuously differentiable on $(0, 1)^2$.
\end{Assumption}\noindent 
Now we can state our theorem. The proof is deferred to \autoref{Appendix}.

\begin{Theorem} \label{Theory:rate_thm}
Let $f$ be a $d$-dimensional density corresponding to a simplified vine distribution with structure $\mathcal{V}=(T_1, \dots, T_{d-1})$ and let $(X_1^{(i)}, \dots, X_d^{(i)}), i=1, \dots, n$, be $iid$ observations from this density. Denote further $\wh f_{\mathrm{vine}}$ as the estimator resulting from \autoref{RVineKDE:seqest_alg} with $(X_1^{(i)}, \dots, X_d^{(i)})_{i = 1, \dots, n}$ and $\mathcal{V}$ as the input.
Under \hyperref[Theory:f_ass]{Assumptions \ref*{Theory:f_ass}}\hyperref[Theory:smoothness_ass]{--}\ref{Theory:smoothness_ass}, it holds for all $\bm x \in \Omega_{\bm X}$,
	\begin{align*}
	\wh f_{\mathrm{vine}}(\bm x) - f(\bm x) = O_p(n^{-r}).
	\end{align*}
\end{Theorem}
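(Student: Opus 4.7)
The plan is to proceed by induction over the tree index $m = 1, \dots, d-1$, proving at each level three coupled claims for every edge $e \in E_m$: (i) the maximal pseudo-observation error $a_{e,n}$ is $o_{a.s.}(n^{-r})$; (ii) the pair-copula density estimators converge pointwise at rate $\wh c_{j_e,k_e;D_e}(u,v) - c_{j_e,k_e;D_e}(u,v) = O_p(n^{-r})$; and (iii) the corresponding h-function estimators converge uniformly on $[\delta, 1-\delta]^2$ at rate $o_{a.s.}(n^{-r})$. Once these three statements hold at every tree, the product form of $\wh f_{\mathrm{vine}}$ lets me telescope the error factor by factor to obtain $O_p(n^{-r})$.

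Base case: for $m = 1$ the conditioning sets are empty and $\wh U_\ell^{(i)} = \wh F_\ell(X_\ell^{(i)})$, so Assumption \ref{Theory:f_ass}(b) directly gives $a_{e,n} \le 2\sup_\ell \sup_x |\wh F_\ell(x) - F_\ell(x)| = o_{a.s.}(n^{-r})$. Writing $\wh c - c = (\wh c - \overline c) + (\overline c - c)$, Assumptions \ref{Theory:bound_ass}(a) and \ref{Theory:oracle_ass}(a) deliver (ii); the analogous uniform split using \ref{Theory:bound_ass}(b) and \ref{Theory:oracle_ass}(b) delivers (iii).

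Inductive step: assume the three claims hold for all trees up to $m$, and pick $e \in E_{m+1}$ with companion edge $e' \in E_m$ provided by the proximity condition, so that $\wh U_{j_e|D_e}^{(i)} = \wh h_{j_{e'}|k_{e'};D_{e'}}(\wh U_{j_{e'}|D_{e'}}^{(i)}\,|\,\wh U_{k_{e'}|D_{e'}}^{(i)})$ and analogously for $U_{j_e|D_e}^{(i)}$ with the true h-function and true arguments. I would decompose
\begin{align*}
\wh h(\wh U, \wh U) - h(U, U) = [\wh h - \overline h](\wh U, \wh U) + [\overline h - h](\wh U, \wh U) + [h(\wh U, \wh U) - h(U, U)],
\end{align*}
and bound the three summands by Assumptions \ref{Theory:bound_ass}(b), \ref{Theory:oracle_ass}(b), and Lipschitz continuity of the true h-function, respectively. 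The last step is a mean-value argument using $\partial_u h_{j|k}(u|v) = c_{j,k}(u,v)$, which is bounded on compact interior subsets by Assumption \ref{Theory:smoothness_ass}, combined with the induction hypothesis (i) applied to tree $m$. This establishes (i) at tree $m+1$; claims (ii) and (iii) then follow exactly as in the base case, using the freshly obtained $a_{e,n} = o_{a.s.}(n^{-r})$.

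To assemble the final rate, I would expand $\wh f_{\mathrm{vine}}(\bm x) - f(\bm x)$ as a telescoping sum over the $d + d(d-1)/2$ factors of the product representation \eqref{RVineKDE:Rvinedensity_eq}: each summand is one factor difference times the remaining factors, where the factor differences are $O_p(n^{-r})$ (for marginals by \ref{Theory:f_ass}(a); for pair-copulas at the fixed interior point by (ii) combined with continuity from \ref{Theory:smoothness_ass} to replace $\wh F_{j_e|D_e}(x_{j_e}|\bm x_{D_e})$ by its true counterpart at rate $n^{-r}$) and the remaining factors are $O_p(1)$. The main obstacle is the propagation of composition errors near the boundary of $[0,1]^2$: Assumptions \ref{Theory:oracle_ass}(b) and \ref{Theory:bound_ass}(b) only give uniform rates on $[\delta, 1-\delta]^2$, yet the intermediate pseudo-observations driving higher-tree estimation are approximately uniform and can fall near $0$ or $1$. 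For the terminal evaluation at the fixed interior $\bm x$ this is a non-issue, but in the transformation step one must either rule out, with high probability, that any sample lands in an $n^{-r}$-neighborhood of the boundary, or shrink $\delta = \delta_n$ with $n$ while using the smoothness in \ref{Theory:smoothness_ass} to keep the h-function Lipschitz constants under control. This is the delicate book-keeping that ties Assumptions \ref{Theory:oracle_ass}--\ref{Theory:smoothness_ass} together, and I expect it to occupy the bulk of the actual proof.
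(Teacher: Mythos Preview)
Your proposal is correct and follows essentially the same induction-and-telescope argument as the paper's proof: the three-term split $[\wh h-\overline h]+[\overline h-h]+[h(\wh U)-h(U)]$ is exactly the paper's $H_{1,n}+H_{2,n}+H_{3,n}$, and the final assembly is the same factor-by-factor replacement. The one miscalibration is your expectation about the boundary: the paper does not shrink $\delta$ with $n$ or track Lipschitz constants near the edge, but simply notes that for each fixed $i$ the true pseudo-observation $(U_{j_e|D_e}^{(i)},U_{k_e|D_e}^{(i)})$ almost surely lies in $[\delta_i,1-\delta_i]^2$ for the random $\delta_i:=\min\{U_{j_e|D_e}^{(i)},U_{k_e|D_e}^{(i)},1-U_{j_e|D_e}^{(i)},1-U_{k_e|D_e}^{(i)}\}>0$, so by the induction hypothesis the estimated version eventually lies in $[\delta_i/2,1-\delta_i/2]^2$ and Assumptions \ref{Theory:oracle_ass}(b), \ref{Theory:bound_ass}(b) apply with this $\delta_i/2$---the whole boundary discussion occupies three lines, not the bulk of the proof.
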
\noindent

Usually, convergence of nonparametric density estimators slows down as dimension increases. This phenomenon is widely known as the curse of dimensionality and restricts the practical application of the estimators to very low-dimensional problems. By \hyperref[Theory:rate_thm]{Theorem \ref*{Theory:rate_thm}}, the proposed vine copula based kernel density estimator inherits the convergence rate of the bivariate copula density  estimator. It does not depend on the dimension $d$ and, therefore, suffers no curse of dimensionality. This is a direct consequence of the simplifying assumption allowing us to subdivide the $d$-dimensional estimation problem into several one- and two-dimensional tasks.

Assuming that the pair-copula densities are $p$ times continuously differentiable, we can achieve convergence with $r = p/(2p + 2)$. Recalling from \citep{Stone80} that a  general nonparametric density estimator has optimal rate $p/(2p + d)$, we see that the vine copula based estimator converges at a rate that is equivalent to the rate of a two-dimensional classical estimator. As this property is independent of dimension, we can expect large benefits of the vine copula approach especially in higher dimensions. We emphasize that a necessary condition for \autoref{Theory:rate_thm} to hold with $r = p/(2p + 2)$ is that the density $f$ belongs to the class of simplified vine densities. If this is not the case, the estimator described in \autoref{RVineKDE} is not consistent, but converges towards a simplified vine density that is merely an approximation of the true density. More specifically, its limit is the \emph{partial vine copula approximation}, first defined in \citep{Spanhel15}. In \autoref{Simulations} we will illustrate that even in this situation an estimator based on simplified vine copulas  can outperform the classical approach on finite samples.

\begin{Remark}
	\autoref{Theory:rate_thm} allows for densities $f$ with arbitrary support. Their support, $\Omega_{\bm X}$, only relates to the marginal distributions; copulas are always supported on $[0,1]^d$. If some of the $X_\ell$ have bounded support, we just have to use estimators for $\wh f_\ell$ that takes  this into account. This underlines how flexible the vine copula based approach is.
\end{Remark}

\begin{Remark}
 It is straightforward to extend \autoref{Theory:rate_thm} to non-simplified vine densities by extending the pair-copula densities to functions of more than two variables. Besides that, the proof given in \autoref{Appendix} does not make use of the simplifying assumption at all. However, the simplifying assumption is necessary for $r = p/(2p + 2)$ to be feasible. More generally, if we assume that the pair-copulas depend on at most $d^\prime$ conditioning variables, the optimal rate is $p/(2p + 2 + d^\prime)$.
\end{Remark}

\begin{Remark}
 \autoref{Theory:rate_thm} can be extended to 
 \begin{align*}
	\sup_{\bm x \in \Omega_{\bm X}} \bigl\vert \wh f_{\mathrm{vine}}(\bm x) - f(\bm x) \bigr\vert = O_p\bigl\{(\ln n/n)^r\bigr\},
\end{align*}
provided that the rate $n^{-r}$ in our assumptions is replaced by $(\ln n/n)^r$ and holds uniformly on $\Omega_{X_\ell}$ and $[0, 1]^2$ respectively. But this requires that the pair-copula densities are bounded which is unusual. For example, it does not hold when $f$ is a multivariate Gaussian density with non-diagonal covariance matrix. If the assumptions are met, $\wh f_{\mathrm{vine}}$ is able to achieve the optimal uniform rate of a two-dimensional nonparametric density estimator which is attained at $r = p/(2p + 2)$ \citep[see,][]{Stone83}. 
\end{Remark}

\hyperref[Theory:f_ass]{Assumptions \ref*{Theory:f_ass}}\hyperref[Theory:bound_ass]{--}\ref{Theory:bound_ass} are very general and hold for a large class of estimators under mild regularity conditions. In \autoref{Practical} we validate them for a particular implementation which will be used in the simulations (\autoref{Simulations}).


\subsection{A note on the asymptotic distribution} \label{sec:asdistr}

We also want to give a brief and general account of the asymptotic distribution of the estimator. Let $d^* = d + d(d-1)/2$ and $\wh{\bm f}^*(\bm x) \in \R^{d^*}$ be the stacked vector of all components of the product $\wh f_{\mathrm{vine}}(\bm x)$ in Eq.\ \eqref{RVineKDE:estimator_eq}, i.e., 
\begin{align*}
\wh{\bm f}^*(\bm x) := \bigl(\wh f_1(x_1), \wh f_2(x_2), \dots, \wh c_{j_e,k_e|D_e}\bigl\{\wh F_{j_e|D_e}(x_{j_e} | \bm x_{D_e}), \wh F_{k_e|D_e}(x_{k_e} | \bm x_{D_e})\bigr\}, \dots\bigr),
\end{align*}
and similarly ${\bm f^*}(\bm x)$. Then $\prod_{k=1}^{d^*} \wh  f^*_k = \wh f_{\mathrm{vine}}(\bm x)$ and $\prod_{k=1}^{d^*} f^*_k = f(\bm x)$. The following result is a simple application of the multivariate delta method.
\begin{Proposition} \label{Theory:norm_prop}
	If for some $\bm \mu_{\bm x} \in \R^{d^*}$, $ \Sigma_{\bm x} \in \R^{d^*\times d^*},$
	\begin{align} \label{Theory:norm_eq}
	n^{r}\bigl\{\wh{\bm f}^*(\bm x) - {\bm f^*}(\bm x) \bigr\} \stackrel{d}{\to}\mathcal{N}_{d^*}\bigl(\bm \mu_{\bm x}, \Sigma_{\bm x}\bigr),
	\end{align}
	then for all $\bm x \in \R^d$,
	\begin{align*}
	n^{r}\bigl\{\wh f_{\mathrm{vine}}(\bm x) -  f(\bm x)\bigr\} \stackrel{d}{\to} \mathcal{N}_d\bigl(\bm \theta^\top \bm \mu_{\bm x}, \bm \theta^\top \Sigma_{\bm x} \bm \theta\bigr),
	\end{align*}
	where $\theta_k = \prod_{j \neq k}  f^*_j(\bm x)$, $k = 1, \dots, d^*$.
\end{Proposition}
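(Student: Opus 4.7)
The plan is to apply the multivariate delta method to the product map $g\colon \R^{d^*} \to \R$ defined by $g(\bm y) = \prod_{k=1}^{d^*} y_k$. By construction, $g\bigl(\wh{\bm f}^*(\bm x)\bigr) = \wh f_{\mathrm{vine}}(\bm x)$ and $g\bigl(\bm f^*(\bm x)\bigr) = f(\bm x)$, so once the delta method is justified the asymptotic distribution of $n^{r}\{\wh f_{\mathrm{vine}}(\bm x) - f(\bm x)\}$ drops out of hypothesis \eqref{Theory:norm_eq} essentially for free.

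First I would check that $g$ is continuously differentiable at $\bm f^*(\bm x)$; this is immediate because $g$ is a polynomial. Its partial derivatives are $\partial g(\bm y)/\partial y_k = \prod_{j\neq k} y_j$, so $\nabla g\bigl(\bm f^*(\bm x)\bigr) = \bm\theta$ with $\theta_k = \prod_{j \neq k} f_j^*(\bm x)$ exactly as in the statement. Next I would invoke the standard multivariate delta method: if $n^{r}(W_n - w) \stackrel{d}{\to} \mathcal{N}_{d^*}(\bm\mu, \Sigma)$ and $g$ is differentiable at $w$, then
\begin{align*}
n^{r}\bigl\{g(W_n) - g(w)\bigr\} \stackrel{d}{\to} \mathcal{N}\bigl(\nabla g(w)^\top \bm\mu,\, \nabla g(w)^\top \Sigma\, \nabla g(w)\bigr).
\end{align*}
Plugging in $W_n = \wh{\bm f}^*(\bm x)$, $w = \bm f^*(\bm x)$, $\bm\mu = \bm \mu_{\bm x}$, $\Sigma = \Sigma_{\bm x}$, and $\nabla g(w) = \bm\theta$ yields precisely the claimed limit $\mathcal{N}\bigl(\bm\theta^\top \bm\mu_{\bm x},\, \bm\theta^\top \Sigma_{\bm x} \bm\theta\bigr)$.

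Because everything reduces to a textbook application of the delta method, there is no real obstacle. The only item requiring care is the bookkeeping that identifies the gradient of the product map evaluated at the population vector with the weights $\bm\theta$ in the statement, which we have done above; no smoothness of the pair-copulas beyond what is already built into the hypothesis \eqref{Theory:norm_eq} needs to be invoked, since differentiability of $g$ at a deterministic point is all the delta method requires.
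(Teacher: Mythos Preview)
Your proposal is correct and matches the paper's own approach exactly: the paper states only that the result ``is a simple application of the multivariate delta method'' and gives no further details, and your argument fills in precisely that application with the product map $g(\bm y)=\prod_k y_k$ and its gradient $\nabla g(\bm f^*(\bm x))=\bm\theta$.
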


The standard way to establish the joint normality assumption \eqref{Theory:norm_eq} is to check the conditions of the multivariate Lindeberg-Feller central limit theorem (see Proposition 2.27  of \citep{vanderVaart98}). We will do this for a particular implementation in \autoref{Practical} (see \autoref{prop:an}).
    \section{On an implementation as kernel estimator} \label{Practical}
So far we did not specify how the marginal densities, pair-copula densities, and h-functions should be estimated. In general, we can tap into the full potential of existing methods. In this section, we discuss a particular implementation as a kernel estimator. We give low-level conditions under which the assumptions of \autoref{Theory:rate_thm} can be verified. We present corresponding consistency results and establish asymptotic normality of $\wh f_{\mathrm{vine}}$. Similar results could be obtained for other implementations. Another issue is that we assumed the structure of the vine to be known. Some heuristics to select an appropriate vine structure are discussed at the end of this section.

\subsection{Estimation of marginal densities and distribution functions} \label{UnivKDE}

Univariate kernel density and distribution function estimators have been extensively studied in the literature. To this day, they are most popular in their original form \citep{Rosenblatt56, Parzen62}: for all $x \in \mathds{R}$,
\begin{align} \label{def:fhat}
	\wh f_\ell(x) = \frac 1 {n b_n}\sum_{i=1}^{n} K\biggl(\frac{X_\ell^{(i)} - x}{b_n}\biggr),\quad  \wh F_\ell(x) = \frac 1 {n}\sum_{i=1}^{n} J\biggl(\frac{X_\ell^{(i)} - x}{b_n}\biggr), 
\end{align}
where $b_n>  0$ is the bandwidth parameter, $K$ is a kernel function and $J(x) = \int_{-\infty}^x K(s)ds$ the integrated kernel. We impose the following assumptions on the kernel function, bandwidth sequence, and marginal distributions.
  \begin{enumerate}[label=K\arabic*:,ref=K\arabic*] 
		\item The kernel function $K$ is a symmetric probability density function supported on $[-1,1]$ and has continuous first-order derivative. \label{K1}
        \item The bandwidth sequence satisfies  $b_n \to 0$ and $nb_n^4/\ln n \to \infty$.  \label{K2}
\end{enumerate}
  \begin{enumerate}[label=M\arabic*:,ref=M\arabic*] 
        \item For all $\ell = 1\, \dots, d$, $f_\ell$ is strictly positive on $\R$ and has uniformly continuous second-order derivative. \label{M1}
	\end{enumerate}
The following result gives the rate of strong uniform consistency for $\wh f_\ell$.
\begin{Proposition}\label{prop:fhat}
	Under conditions \ref{K1}, \ref{K2}, and \ref{M1}, the estimator \eqref{def:fhat} satisfies
    \begin{align*}
		\sup_{x \in \R}\bigl\vert \wh f_\ell(x) - f_\ell(x) \bigr\vert = O_{a.s.}\bigl(b_n^2 + \sqrt{\ln n / (nb_n)}\bigr).
	\end{align*}
     for all $\ell = 1\, \dots, d$.
\end{Proposition}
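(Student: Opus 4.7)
The plan is to use the standard bias--variance decomposition
\[
\wh f_\ell(x) - f_\ell(x) = \bigl\{\wh f_\ell(x) - \E[\wh f_\ell(x)]\bigr\} + \bigl\{\E[\wh f_\ell(x)] - f_\ell(x)\bigr\},
\]
and to bound both summands uniformly in $x \in \R$. For the bias, a change of variable gives $\E[\wh f_\ell(x)] = \int_{-1}^{1} K(u) f_\ell(x - b_n u)\, du$. A second-order Taylor expansion around $x$, combined with the symmetry of $K$ from \ref{K1} (which eliminates the linear term), reduces the bias to $\tfrac{b_n^2}{2}\int u^2 K(u)\, f_\ell''(\xi_n(x,u))\, du$ for some intermediate point $\xi_n(x,u)$. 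The uniform continuity and boundedness of $f_\ell''$ from \ref{M1} then deliver a uniform bound of order $b_n^2$.

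For the stochastic fluctuation, I would use a discretization argument. On any compact set, lay down a grid $x_1, \dots, x_{N_n}$ of spacing $\Delta_n$. Since $K'$ is continuous and compactly supported (\ref{K1}), $\wh f_\ell$ and $\E[\wh f_\ell]$ both satisfy the Lipschitz bound $|\wh f_\ell(x) - \wh f_\ell(x')| \le \|K'\|_\infty\, b_n^{-2}\, |x - x'|$, so choosing $N_n$ polynomial in $n$ makes the off-grid error of smaller order than the target rate under \ref{K2}. At each grid point, the summands $Z_i(x_k) := b_n^{-1} K((X_\ell^{(i)} - x_k)/b_n) - \E[\,\cdot\,]$ are bounded by $O(b_n^{-1})$ and have variance $O(b_n^{-1})$, so Bernstein's inequality yields, for $t_n = A\sqrt{\ln n/(nb_n)}$ and sufficiently large $A$,
\[
\Prob\bigl(|\wh f_\ell(x_k) - \E[\wh f_\ell(x_k)]| > t_n\bigr) \le 2\exp(-c A^2 \ln n),
\]
with the linear term in the Bernstein denominator dominated thanks to $nb_n^4/\ln n \to \infty$ in \ref{K2}. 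A union bound over the $N_n$ grid points is summable in $n$, so Borel--Cantelli delivers the uniform almost-sure rate $\sqrt{\ln n/(nb_n)}$ on the grid, and the Lipschitz bound transports it to the whole compact set.

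The main obstacle is passing from a fixed compact set to all of $\R$, since \ref{M1} imposes no explicit tail condition on $f_\ell$. I would handle this by truncating at $M_n$ growing like $\ln n$: on $\{|x| > M_n + b_n\}$, the compact support of $K$ implies $\wh f_\ell(x)=0$ once $\max_i |X_\ell^{(i)}| \le M_n$, an event of probability tending to one fast enough for Borel--Cantelli (using that $f_\ell$ is an integrable density, hence its tails are summable on a growing logarithmic window), and $f_\ell(x)$ itself is negligible compared to the stated rate on the same region by integrability combined with the uniform continuity of $f_\ell''$. Besides this tail step, the only delicate point is the simultaneous calibration of $\Delta_n$, $N_n$ and the Bernstein threshold against the bandwidth constraints of \ref{K2}, but these are routine once the three pieces are in place.
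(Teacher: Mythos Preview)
Your bias argument and your compact-set treatment of the stochastic term via discretization and Bernstein are both fine and match the paper's reasoning for those pieces. The genuine gap is in the tail step. Condition \ref{M1} imposes no decay rate on $f_\ell$: for instance, the Cauchy density $f_\ell(x)=\{\pi(1+x^2)\}^{-1}$ is strictly positive on $\R$ with uniformly continuous, bounded second derivative, so it satisfies \ref{M1}. For such $f_\ell$, $\Prob(\max_i|X_\ell^{(i)}|>M_n)\le n\int_{|x|>M_n}f_\ell(x)\,dx$ is of order $n/M_n$, which is not summable (indeed not even $o(1)$) for $M_n$ growing like $\ln n$; and on $\{|x|>M_n\}$ one has $f_\ell(x)\sim M_n^{-2}\sim(\ln n)^{-2}$, which is far larger than the target rate $b_n^2+\sqrt{\ln n/(nb_n)}$. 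So neither ingredient of your truncation --- control of the sample maximum and smallness of $f_\ell$ in the tail --- is available under \ref{M1} alone.

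The paper avoids this difficulty entirely by not attempting a truncation: it handles the stochastic term $\sup_{x\in\R}\bigl\vert\wh f_\ell(x)-\E\{\wh f_\ell(x)\}\bigr\vert$ in one stroke by invoking Theorem~2.3 of Gin\'e and Guillou (2002), an empirical-process result that exploits the fact that the kernel class $\{K((\cdot-x)/b_n):x\in\R,\,b_n>0\}$ is of VC type with a bounded envelope. That machinery delivers the $O_{a.s.}\bigl(\sqrt{\ln n/(nb_n)}\bigr)$ rate uniformly over all of $\R$ without any tail assumption on the underlying density. If you want to stay self-contained, you would need to replace your Bernstein-plus-truncation argument by a chaining or bracketing argument that works globally; a bare discretization over a growing window cannot close the gap here.
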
\noindent
\begin{proof}
	A standard result for kernel density estimation \citep[see, e.g.,][Section 6.2.1]{Scott08} is
   \begin{align*}
		\E\bigl\{\wh f_\ell(x)\bigr\} - f_\ell(x) = \frac{1}{2} b_n^2 \sigma^2_K \frac{\partial^2}{\partial x^2} f_\ell(x) + o(b_n^2), 
\end{align*}
where $\sigma_K^2 = \int_{[-1, 1]}x^2K(x)dx < \infty$ by \ref{K1} and $\partial^2/\partial x^2 f_\ell(x)$ is bounded by \ref{M1}. The claim then follows from Theorem 2.3 of \citep{Gine02} which states
\begin{align}
	\sup_{x \in \R} \bigl\vert \wh f_\ell(x) - \E\bigl\{\wh f_\ell(x)\bigr\} \bigr\vert = O_{a.s.}\bigl(\sqrt{\ln n / (nb_n)}\bigr). \tag*{\qedhere}
\end{align}
\end{proof}\noindent
\autoref{prop:fhat} implies pointwise weak consistency of $\wh f_\ell$ as well as strong uniform consistency of $\wh F_\ell$ with the same rate. In both cases the rate could be improved, but the result will be sufficient for our purposes. The mean-square optimal bandwidth for $\wh f_\ell$ is $b_n = O(n^{-1/5})$ for which  \autoref{prop:fhat} holds with rate $O_{a.s.}(n^{-2/5}\sqrt{\ln n})$. 

Extensions of the above estimator comprise variable bandwidth methods \citep{Sain96}, transformation techniques for heavy-tailed distributions \citep{Bolance08}, and boundary kernel estimators that avoid bias and consistency issues on bounded support \citep{Bouezmarni10}. 


\subsection{Estimation of pair-copula densities} \label{BiCopKDE}

Nonparametric estimation of copula densities requires caution because they are  supported on  the unit hypercube. An estimator that  takes no account of this property will suffer from bias issues at the boundaries of the support. A few kernel estimators particularly suited for bivariate copula densities were proposed in the literature \citep{Gijbels90, Charpentier06, Geenens14a}. Other nonparametric estimators can be constructed based on Bernstein polynomials \citep{Sancetta04}, B-splines \citep{Kauermann13}, or wavelets \citep{Genest09}. 

In this paper, we will use the transformation estimator of \citep{Charpentier06}. The idea is to transform the data to standard normal margins (and therefore unbounded support) where the transformed density gets estimated by a standard kernel estimator. Then, this estimate is transformed back to uniform margins. Denote $\Phi$, $\Phi^{-1}$, and $\phi$ as the standard Gaussian $cdf$, quantile and density functions. For $\bm s \in \R^2$, let us write short $\bm K(\bm s) = K(s_1)K(s_2)$, and $\bm K_{B_n}(\bm s) = \bm K(B_n^{-1}\bm s) / \det(B_n)$ for some positive definite bandwidth matrix $B_n \in \R^2$. The transformation estimator is defined via
\begin{align}
	\overline c_{j_e,k_e;D_e}(u,v) = \frac{1}{n} \sum_{i=1}^{n} \bm K_{B_n} \begin{pmatrix}
\Phi^{-1}(u) - \Phi^{-1}(U_{j_e|D_e}^{(i)}) \\
\Phi^{-1}(v) - \Phi^{-1}(V_{k_e|D_e}^{(i)})
\end{pmatrix}
/\bigl[\phi\bigl\{\Phi^{-1}(u)\bigr\}\phi\bigl\{\Phi^{-1}(v)\bigr\}\bigr]. \label{def:trafo}
\end{align}
 In order to verify the high-level assumptions \hyperref[Theory:oracle_ass]{\ref*{Theory:oracle_ass}a} and \hyperref[Theory:bound_ass]{\ref*{Theory:bound_ass}a}, we need the following two conditions to hold for all $e \in E_1, \dots, E_{d-1}$:
\begin{enumerate}[label=C\arabic*:,ref=C\arabic*] 
        \item The true pair-copula densities $c_{j_e, k_e;D_e}$ are twice continuously differentiable on $(0, 1)^2$.  \label{C1}
        \item The transformed densities  $\psi_{j_e, k_e;D_e}(x, y) = c_{j_e, k_e;D_e}\bigl\{\Phi(x), \Phi(v)\bigr\}\phi(x)\phi(y)$ have continuous and bounded first- and second-order derivatives on $\R^2$.  \label{C2}
\end{enumerate}
\ref{C1} is a smoothness condition that is very common in nonparametric estimation. \ref{C2} is less standard as it relates to the transformed density. Sufficient conditions for \ref{C2} are given in Lemma A.1 of \citep{Geenens14a} and can be verified for many parametric families, including the ones used in our simulation study.

To avoid unnecessary technicality, we will assume here that the bandwidth matrix is a multiple of the identity matrix: $B_n = b_n \times I_2$. 
\begin{Proposition} \label{prop:chat}
Under conditions \ref{K1}, \ref{K2}, \ref{C1}, and \ref{C2}, the estimator \eqref{def:trafo} satisfies for all $(u,v) \in (0,1)^2$, $e \in E_1, \dots, E_m$,
	\begin{align*}
		\overline c_{j_e, k_e;D_e}(u,v) - c_{j_e, k_e;D_e}(u,v) &= O_p\bigl(b_n^2 + \sqrt{1 / (nb_n^2)}\bigr), \\
		\wh c_{j_e, k_e;D_e}(u,v) - \overline c_{j_e, k_e;D_e}(u, v) &= O_p(a_{e, n}).
     \end{align*}
\end{Proposition}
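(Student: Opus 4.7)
The plan is to reduce both statements to standard facts about bivariate kernel density estimation on $\R^2$ and to treat the pseudo-observation error as a perturbation of the oracle estimator. I would work throughout on the transformed scale by writing
\begin{align*}
\overline c_{j_e,k_e;D_e}(u,v) = \frac{\overline\psi\bigl(\Phi^{-1}(u),\Phi^{-1}(v)\bigr)}{\phi\bigl(\Phi^{-1}(u)\bigr)\phi\bigl(\Phi^{-1}(v)\bigr)},
\end{align*}
where $\overline\psi$ is the bivariate KDE of $\psi := \psi_{j_e,k_e;D_e}$ based on the oracle sample $\bigl(\Phi^{-1}(U_{j_e|D_e}^{(i)}),\Phi^{-1}(U_{k_e|D_e}^{(i)})\bigr)_{i=1,\dots,n}$; likewise define $\wh\psi$ using the pseudo-observations, so that $\wh c_{j_e,k_e;D_e}$ is the analogous ratio. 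Because the PIT-variables are exactly uniform, the transformed pair has density $\psi$, making $\overline\psi$ an ordinary bivariate KDE on $\R^2$. For the first bound I would then apply the textbook bias--variance decomposition: under \ref{K1} and \ref{C2}, a second-order Taylor expansion combined with the symmetry of $K$ gives bias $O(b_n^2)$, and a direct variance computation using the boundedness of $K$ gives $\mathrm{Var}[\overline\psi(x,y)] = O((nb_n^2)^{-1})$. Chebyshev's inequality then yields $\overline\psi(x,y) - \psi(x,y) = O_p(b_n^2 + (nb_n^2)^{-1/2})$, and because $(u,v) \in (0,1)^2$ is fixed the denominator $\phi(\Phi^{-1}(u))\phi(\Phi^{-1}(v))$ is a positive constant so the rate transfers to $\overline c_{j_e,k_e;D_e}(u,v) - c_{j_e,k_e;D_e}(u,v)$.

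For the second bound I would show $\wh\psi(x,y) - \overline\psi(x,y) = O_p(a_{e,n})$ at the transformed point $(x,y)=(\Phi^{-1}(u),\Phi^{-1}(v))$. Since $(u,v)$ is fixed and interior to $(0,1)^2$, and the kernel has compact support by \ref{K1}, only indices whose $U^{(i)}$ lies in a shrinking neighborhood of $u$ (or $v$) contribute; for $n$ large the relevant $U^{(i)}$ and $\wh U^{(i)}$ are thus bounded away from $\{0,1\}$, so $\Phi^{-1}$ is Lipschitz on that set and $\bigl|\Phi^{-1}(\wh U^{(i)}) - \Phi^{-1}(U^{(i)})\bigr| \le L\, a_{e,n}$. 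A first-order Taylor expansion of $\bm K_{B_n}$ around the unperturbed arguments, justified by the $C^1$-smoothness in \ref{K1}, decomposes $\wh\psi - \overline\psi$ into a leading linear term
\begin{align*}
-\frac{1}{n}\sum_{i=1}^n \nabla \bm K_{B_n}(\bm s_i)^{\!\top} \bm\epsilon_i
\end{align*}
plus a quadratic remainder, where $\bm s_i$ is the unperturbed kernel argument and $\|\bm\epsilon_i\|=O(a_{e,n})$. Because the averaged kernel derivative is a consistent estimator of $\nabla \psi(x,y)=O(1)$ (by standard KDE-derivative theory under \ref{C2}), signed cancellation makes the leading term of order $O_p(a_{e,n})$, and the quadratic remainder is negligible in the regime $a_{e,n}=o(b_n)$.

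The main obstacle is precisely the recovery of the sharp rate $O_p(a_{e,n})$ rather than the coarser $O_p(a_{e,n}/b_n)$ that a naive mean value bound using $\|\nabla \bm K_{B_n}\|_\infty = O(b_n^{-3})$ and the $O(b_n^2)$-sized effective kernel support would yield. This requires genuinely exploiting the signed cancellation in the kernel-derivative average, so that the relevant quantity behaves like $\nabla\psi(x,y)=O(1)$ rather than like its absolute-value counterpart, which is only $O(b_n^{-1})$, together with careful control of the Taylor remainder in the regime $a_{e,n}=o(b_n)$ guaranteed by \ref{K2} and by the decay rates of pseudo-observation errors established for lower trees. A secondary, purely technical nuisance is the boundary behaviour of $\Phi^{-1}$ near $\{0,1\}$, which is neutralised by the compact kernel support at a fixed interior evaluation point.
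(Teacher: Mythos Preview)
Your proposal is correct and mirrors the paper's argument almost exactly: the first bound is obtained by the standard bias--variance analysis of the bivariate KDE $\overline\psi$ on the transformed scale (which the paper simply cites), and the second bound is the content of the paper's \autoref{lem:c}, which likewise Taylor-expands the kernel, controls the $\Phi^{-1}$-perturbation via the compact kernel support at a fixed interior point, and then uses consistency of the KDE gradient for $\nabla\psi$ (the paper invokes Theorem~9 of \citep{Hansen08}) to recover the sharp rate $O_p(a_{e,n})$ rather than $O_p(a_{e,n}/b_n)$. The only cosmetic difference is that the paper Taylor-expands $\Phi^{-1}$ first to produce the factor $1/\phi(Z_j^{(i)})$ and then bounds it by $\eta_n(\bm z)=O(1)$ using the kernel support, whereas you invoke a local Lipschitz bound directly; both routes are equivalent.
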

\begin{proof}
	For the first equality, see Section 3.4 in \citep{Nagler14}. For the second, see \autoref{lem:c} in \autoref{Appendix2}.
\end{proof}\noindent
When the mean-square optimal bandwidth $b_n = O(n^{-1/6})$ is used, the right hand side of the first equality is $O_p\bigl(n^{-1/3}\bigr)$.

\subsection{Estimation of h-functions} \label{BiCopKHE}
Recall that h-functions are actually conditional distribution functions: 
\begin{align*}
	h_{j_e| k_e;D_e}(u| v) = \Pr(U_{j_e|D_e} \le u | U_{k_e|D_e} = v) = \E\bigl\{\ind(U_{j_e|D_e} \le u) | U_{k_e|D_e} = v\bigr\}.
\end{align*}
The second equality relates the conditional $cdf$ to a regression problem. Hence, any nonparametric regression estimator is suitable for estimation of the h-functions. In our case, it is even simpler to integrate the density estimate to obtain an estimate of the corresponding h-function: for the oracle estimators, 
\begin{align} \label{def:hfunc}
	\overline h_{j_e| k_e;D_e}(u| v) := \int_0^u \overline c_{k_e, j_e;D_e}(s, v) ds, \quad \overline h_{k_e| j_e;D_e}(v| u) := \int_0^v \overline c_{j_e, k_e;D_e}(u, s) ds,
\end{align}
and the feasible estimators $\wh h_{j_e| k_e;D_e}$ and $\wh h_{k_e| j_e;D_e}$ are defined similarly.
Such estimators are closely related to the smoothed Nadaraya-Watson estimator of \citep{Hansen04}. In fact, they coincide when we choose diagonal $B_n$ in \eqref{def:trafo}. For an explicit formula, see \eqref{not_h_eq} in \autoref{Appendix2}. The following result puts this estimator in the context of \hyperref[Theory:oracle_ass]{\ref*{Theory:oracle_ass}b} and \hyperref[Theory:bound_ass]{\ref*{Theory:bound_ass}b}.
    \begin{Proposition} \label{prop:hhat}
Under conditions \ref{K1}, \ref{K2}, \ref{C1}, and \ref{C2}, the estimator defined by \eqref{def:hfunc}  and \eqref{def:trafo} satisfies for all $\delta \in (0, 0.5]$,  and $e \in E_1, \dots, E_m$,  
	\begin{align*}
		\sup_{(u, v) \in [\delta, 1- \delta]^2} \bigl\vert \overline h_{j_e| k_e;D_e}(u|v) - h_{j_e| k_e;D_e}(u|v) \bigl\vert  &=  O_{a.s.}\bigl(b_n^2 + \sqrt{\ln n / (nb_n)}\bigr), \\
		\sup_{(u, v) \in [\delta, 1- \delta]^2} \bigl\vert \overline h_{k_e| j_e;D_e}(u|v) - h_{k_e| j_e;D_e}(u|v) \bigl\vert  &=  O_{a.s.}\bigl(b_n^2 + \sqrt{\ln n / (nb_n)}\bigr), \\
       \sup_{(u, v) \in [\delta, 1- \delta]^2} \bigl\vert \wh h_{j_e| k_e;D_e}(u|v) - \overline h_{j_e| k_e;D_e}(u|v) \bigl\vert  &=  O_{a.s.}\bigl(a_{e,n}\bigr), \\
       \sup_{(u, v) \in [\delta, 1- \delta]^2} \bigl\vert \wh h_{k_e| j_e;D_e}(u|v) - \overline h_{k_e| j_e;D_e}(u|v) \bigl\vert  &=  O_{a.s.}\bigl(a_{e,n}\bigr).
	\end{align*}
\end{Proposition}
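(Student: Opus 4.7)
The plan is to split each uniform bound into an \emph{oracle error} $\overline h - h$ and a \emph{perturbation error} $\wh h - \overline h$, and handle the two by quite different tools. The starting point is to use the explicit form \eqref{def:trafo} together with the substitution $t = \Phi^{-1}(s)$ in the definition $\overline h_{j_e|k_e;D_e}(u|v) = \int_0^u \overline c(s,v)\,ds$ to obtain
\begin{align*}
\overline h_{j_e|k_e;D_e}(u|v) = \frac{1}{n\,\phi\{\Phi^{-1}(v)\}}\sum_{i=1}^n J\!\left(\frac{\Phi^{-1}(u) - \Phi^{-1}(U^{(i)}_{j_e|D_e})}{b_n}\right) K_{b_n}\!\bigl(\Phi^{-1}(v) - \Phi^{-1}(U^{(i)}_{k_e|D_e})\bigr),
\end{align*}
where $K_{b_n}(s) = K(s/b_n)/b_n$. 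This is a smoothed Nadaraya--Watson-type object in which only the $K$-factor carries an ``active'' bandwidth; the bounded $J$-factor contributes no $1/b_n$ inflation. That is why the rate $\sqrt{\ln n/(nb_n)}$ appearing in the statement is of one-dimensional kernel type rather than two-dimensional.

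For the oracle term I would follow the template of the proof of \autoref{prop:fhat}. The bias $\E\overline h - h$ is handled by integrating the $J$-factor by parts against the transformed density $\psi_{j_e,k_e;D_e}$, which converts the problem into a standard symmetric-kernel Taylor expansion around $(\Phi^{-1}(u),\Phi^{-1}(v))$; bounded second derivatives from \ref{C2} plus symmetry of $K$ from \ref{K1} then yield a uniform $O(b_n^2)$ bound over the compact set $\Phi^{-1}([\delta,1-\delta])^2$. For the centered part I would apply Theorem~2.3 of Gine and Guillou to the VC-type class $\mathcal{F}_n = \bigl\{(x,y)\mapsto J((u-x)/b_n) K_{b_n}(v-y) : (u,v) \in \Phi^{-1}([\delta,1-\delta])^2\bigr\}$; each summand has variance of order $1/b_n$ (the $J$-factor being bounded), so Gine--Guillou yields the uniform rate $O_{a.s.}(\sqrt{\ln n/(nb_n)})$ under $nb_n^4/\ln n \to \infty$ from \ref{K2}. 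The bounds for $\overline h_{k_e|j_e;D_e}$ follow by swapping the roles of the two arguments.

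For the perturbation error I would start from
\begin{align*}
\wh h_{j_e|k_e;D_e}(u|v) - \overline h_{j_e|k_e;D_e}(u|v) = \int_0^u \bigl\{\wh c(s,v) - \overline c(s,v)\bigr\}\,ds,
\end{align*}
telescope the integrand into three Lipschitz contributions (perturbation of the $J$-factor, of $K_{b_n}$, and of the prefactor $1/\phi\{\Phi^{-1}(\cdot)\}$), and use that on $[\delta,1-\delta]$ the map $\Phi^{-1}$ is Lipschitz, so perturbations of size $a_{e,n}$ translate into perturbations of the same order on the Gaussian scale. A termwise Lipschitz estimate naively yields a $1/b_n$ inflation, but this is compensated by the $O_p(nb_n)$ effective sample size (only indices $i$ with $\Phi^{-1}(U^{(i)}_{k_e|D_e})$ in the compact support of $K_{b_n}(\Phi^{-1}(v)-\cdot)$ contribute) and by the subsequent integration in $s$, which absorbs the remaining $1/b_n$ through the bounded support of $K$. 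The resulting uniform bound is $O_{a.s.}(a_{e,n})$, with essentially the same bookkeeping as in \autoref{lem:c} of \autoref{Appendix2}.

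The main obstacle is the perturbation step. The arguments $(u,v)$ lie on a compact subset of $(0,1)^2$ where $\Phi^{-1}$ is Lipschitz, but the pseudo-observations themselves may be arbitrarily close to $0$ or $1$, where $\Phi^{-1}$ is unbounded and no uniform Lipschitz constant is available. The workaround is a localization: by compact support of $K$, only indices with $\Phi^{-1}(U^{(i)}_{k_e|D_e})$ within $O(b_n)$ of $\Phi^{-1}(v)$ can contribute, and such indices automatically have $U^{(i)}_{k_e|D_e}$ in a fixed sub-interval of $(0,1)$ on which $\Phi^{-1}$ is Lipschitz with a deterministic constant; the same localization handles the $J$-factor via its transition region $[-b_n, b_n]$. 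This is precisely what turns the crude $a_{e,n}/b_n$ pointwise estimate into the clean $a_{e,n}$ rate after summation and integration.
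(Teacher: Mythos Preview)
Your treatment of the oracle error $\overline h-h$ is essentially the paper's: they cite Hansen~(2004) for the bias and Lemma~2.2 of H\"ardle~(1988) for the stochastic term, whereas you do the Taylor expansion by hand and use Gin\'e--Guillou. The mechanics and the rate are the same.

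The perturbation argument, however, has a genuine gap. Your bookkeeping does not close for the term coming from perturbing the \emph{second} argument. After the integration in $s$ you are left with
\[
\frac{1}{n\,\phi(z_2)}\sum_{i=1}^n J_{b_n}\bigl(z_1-Z_1^{(i)}\bigr)\,
\bigl[K_{b_n}\bigl(z_2-\wh Z_2^{(i)}\bigr)-K_{b_n}\bigl(z_2-Z_2^{(i)}\bigr)\bigr].
\]
Here $J_{b_n}$ equals $1$ whenever $Z_1^{(i)}\le z_1-b_n$, so it provides \emph{no} localization in the first coordinate; the only localization is through the support of $K_{b_n}$ in the second, giving $O_{a.s.}(nb_n)$ contributing indices. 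Since the Lipschitz constant of $K_{b_n}$ is $O(b_n^{-2})$, the crude bound is $n^{-1}\cdot O(nb_n)\cdot O(b_n^{-2})\cdot a_{e,n}=O(a_{e,n}/b_n)$. The ``integration in $s$'' has already been spent turning the first kernel into $J_{b_n}$ and cannot absorb this residual $1/b_n$. So the counting argument you outline delivers only $O_{a.s.}(a_{e,n}/b_n)$, not $O_{a.s.}(a_{e,n})$.

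The paper's fix (Lemma~B3, following the template of Lemma~B1) is different in kind: after a first-order Taylor expansion, the bracket above produces the factor $n^{-1}\sum_i J_{b_n}(z_1-Z_1^{(i)})K_{b_n}'(z_2-Z_2^{(i)})$, which is recognized as a kernel estimator of $\partial_{z_2}\bigl[\phi(z_2)\,h\{\Phi(z_1)\mid\Phi(z_2)\}\bigr]$. Uniform consistency of kernel derivative estimators (Theorem~9 of Hansen~(2008), exactly as invoked in Lemma~B1) then shows that this sum converges a.s.\ uniformly on compacts to a bounded deterministic function, which is what removes the apparent $1/b_n$. Your route can be repaired by making precisely this replacement: Taylor-expand the bracket and appeal to convergence of the resulting derivative estimator rather than bounding it by its Lipschitz constant.
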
	\noindent
\begin{proof}
See \hyperref[lem:h1]{Lemmas \ref*{lem:h1}} and \ref{lem:h2} in \autoref{Appendix2}.
\end{proof}\noindent
The optimal rate of convergence in the first two equalities is $O_{a.s.}\bigl\{(\ln n / n)^{2/5}\bigr\}$ and attained for $b_n = O\bigl\{(\ln n / n)^{1/5}\bigr\}$. 

\hyperref[Theory:oracle_ass]{Assumption \ref*{Theory:oracle_ass}b} requires that the error of estimating the h-function vanishes faster than the error of pair-copula density estimation. This is readily achieved by using the optimal bandwidth in each component. However, it may be more convenient to use the same bandwidth for pair-copula density as well as h-function estimation. It seems natural to use the optimal rate for pair-copula density estimation, $b_n = O(n^{-1/6})$. But this violates \autoref{Theory:oracle_ass}, because both estimators converge with the same rate: $n^{-1/3}$. To overcome this, we have to increase the speed of $b_n$ by a small amount, i.e., to undersmooth the pair-copula density estimate. When $b_n = \alpha_n n^{-1/6}$, $\alpha_n = o(1)$, the pair-copula density estimators converges with rate $\alpha_n^{-1}n^{-1/3}$ and the h-function estimator with rate $\alpha_n^{2} n^{-1/3} + \alpha_n^{-1/2}n^{-5/12}  = o(\alpha_n^{-1}n^{-1/3})$. But the sequence $\alpha_n$ can converge arbitrarily slow. So we should not expect any problems with using the mean-square optimal rate $b_n = n^{-1/6}$ in practice. This was confirmed by preliminary numerical experiments.

\subsection{Asymptotic normality}

We now put all pieces together and show that the estimator $\wh f_{\mathrm{vine}}$ composed of \eqref{def:fhat},  \eqref{def:trafo}, and \eqref{def:hfunc} is asymptotically normal. We start by establishing the joint asymptotic normality of all components. The proof is deferred to \autoref{Appendix3}. 
\begin{Proposition} \label{prop:an}
	Assume that 
    \begin{enumerate}
	\item  conditions \ref{K1}, \ref{M1}, \ref{C1}, and \ref{C2} hold,
    \item $\wh f_\ell$ and $\wh F_\ell$ are defined by \eqref{def:fhat} with (marginal) bandwidth parameter $b_{n,m}$,
    \item $\wh c_{j_e,k_e;D_e}$ are defined by \eqref{def:trafo} with (copula) bandwidth parameter $b_{n,c}$,
    \item $\wh h_{j_e|k_e;D_e}$ and  $\wh h_{j_e|k_e;D_e}$ are defined by \eqref{def:hfunc} and \eqref{def:trafo} with (h-function) bandwidth parameter $b_{n,h}$,
    \item it holds $b_{n,c} = O(n^{-1/6})$, and for sufficiently large $n$,
    \begin{align*}
		b_{n,c}^2 < b_{n,m} \le b_{n,h} \le \min\{b_{n, c}, n^{-1/6}/\log{n}\}.
\end{align*}
\end{enumerate}
Recall  the definition of $\wh{\bm f}^*(\bm x)$, ${\bm f}^*(\bm x)$, and $d^*$ from \autoref{sec:asdistr}.  It holds for all $\bm x \in R^d$, 
\begin{align} \label{joint_an}
(nb_{n, c}^2)^{1/2}\bigl\{\wh{\bm f}^*(\bm x) - b_{n,c}^2 \bm \mu_{\bm x} - {\bm f}^*(\bm x)\bigr\} &\stackrel{d}{\to} \mathcal{N}_{d^*}\bigl(0, \Sigma_{\bm x}\bigr),
\end{align}
where ${\bm \mu}_{\bm x} = (\bm 0_d^\top, \tilde{\bm \mu}_{\bm x}^\top)^\top$, $\tilde{\bm \mu}_{\bm x} = (\tilde{\mu}_{\bm x, e})_{e \in E_1, \dots, E_{d-1}}$, and $\Sigma_{\bm x}$ is diagonal with first $d$ diagonal entries equal to 0 and remaining  diagonal entries $(\tilde \sigma_{\bm x, e})_{e \in E_1, \dots, E_{d-1}}$. Explicit expressions for $\tilde{\mu}_{\bm x, e}$ and $\tilde \sigma_{\bm x, e}$ are given in \eqref{el_mu} and \eqref{el_sig} in \autoref{Appendix3}.
\end{Proposition}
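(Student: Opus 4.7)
The strategy is to write the left-hand side of \eqref{joint_an} as an oracle-based vector plus an $o_p(1)$ remainder, and then apply a multivariate CLT to the oracle part. Writing $c_e:=c_{j_e,k_e;D_e}$ (and analogously $\overline c_e,\wh c_e$), set $u_e := F_{j_e|D_e}(x_{j_e}\mid\bm x_{D_e})$, $v_e := F_{k_e|D_e}(x_{k_e}\mid\bm x_{D_e})$, with plug-in versions $\wh u_e,\wh v_e$. For each edge $e$ I decompose
\begin{align*}
\wh c_e(\wh u_e,\wh v_e)-c_e(u_e,v_e)
= \bigl[\overline c_e(u_e,v_e)-c_e(u_e,v_e)\bigr] + R^{(1)}_e + R^{(2)}_e,
\end{align*}
with $R^{(1)}_e := \wh c_e(\wh u_e,\wh v_e)-\overline c_e(\wh u_e,\wh v_e)$ (pseudo-observation perturbation) and $R^{(2)}_e := \overline c_e(\wh u_e,\wh v_e)-\overline c_e(u_e,v_e)$ (argument perturbation). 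The marginal coordinates $\wh f_\ell(x_\ell)-f_\ell(x_\ell)$ involve no plug-in step and are treated directly via \autoref{prop:fhat}.

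\textbf{Negligible pieces.} Conditions $b_{n,c}^2<b_{n,m}\le n^{-1/6}/\log n$ together with $b_{n,c}=O(n^{-1/6})$ imply both $(nb_{n,c}^2)^{1/2}(nb_{n,m})^{-1/2}\to 0$ and $(nb_{n,c}^2)^{1/2}b_{n,m}^2\to 0$, so by \autoref{prop:fhat} the scaled marginal bias and variance disappear, which accounts for the zeros in the first $d$ entries of $\bm\mu_{\bm x}$ and $\Sigma_{\bm x}$. For the copula remainders I prove by induction on the tree level $m$ that $a_{e,n}=o_p\{(\log n)^{-1/2}(nb_{n,c}^2)^{-1/2}\}$ uniformly in $i$: the base case is \autoref{prop:fhat} applied to the marginal CDF estimators; the inductive step is \autoref{prop:hhat} applied to the h-function estimators, whose bandwidth $b_{n,h}\le n^{-1/6}/\log n$ supplies the required logarithmic slack. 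Once this holds, \autoref{prop:chat} gives $R^{(1)}_e=O_p(a_{e,n})=o_p((nb_{n,c}^2)^{-1/2})$, and a first-order Taylor expansion of $\overline c_e$ around $(u_e,v_e)$ — its partial derivatives being bounded in probability on a compact interior neighborhood by \ref{C1}, \ref{K2}, and the uniform convergence of $\overline c_e$ to $c_e$ — yields the same bound for $R^{(2)}_e$.

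\textbf{CLT on the oracle part.} Each oracle transformation estimator admits the standard kernel decomposition
\begin{align*}
\overline c_e(u_e,v_e)-c_e(u_e,v_e)
= b_{n,c}^2 \tilde\mu_{\bm x,e}
+ \frac{1}{n}\sum_{i=1}^n Z_{e,i}
+ o_p\bigl((nb_{n,c}^2)^{-1/2}\bigr),
\end{align*}
where $Z_{e,i}$ is the centered, Jacobian-adjusted product kernel for edge $e$, and the bias constant $\tilde\mu_{\bm x,e}$ is computed from the transformed density's second derivatives via \ref{C2}. Since $(Z_{e,i})_{e}$ is iid across $i$ with $\mathrm{Var}(Z_{e,i})=O(b_{n,c}^{-2})$ and $\E|Z_{e,i}|^{2+\eta}=O(b_{n,c}^{-2-\eta})$ for some $\eta>0$ (so that Lyapunov's condition is trivially verified), the multivariate Lindeberg–Feller CLT (Proposition 2.27 of \citep{vanderVaart98}) applied with normalization $(nb_{n,c}^2)^{1/2}/n$ delivers joint asymptotic normality. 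The diagonal structure of $\Sigma_{\bm x}$ is a consequence of the compact support of $K$ (condition \ref{K1}): for $e\neq e'$ the $\Phi^{-1}$-transformed evaluation points $(u_e,v_e)\neq(u_{e'},v_{e'})$ differ, so the two shrinking product kernels have disjoint supports once $b_{n,c}$ is small, forcing $b_{n,c}^{-2}\cov(Z_{e,i},Z_{e',i})\to 0$; the diagonal entries $\tilde\sigma_{\bm x,e}$ arise from the usual $\|K\|_2^4$ integral. Slutsky's lemma then combines the oracle CLT with the $o_p(1)$ remainders and the negligible marginal block to deliver \eqref{joint_an}.

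\textbf{Main obstacle.} The delicate step is the induction on $m$ for $a_{e,n}$. Pseudo-observations at level $m$ are nonlinear compositions of h-function estimators from all lower trees, so the errors compound under composition and one must verify (i) that the intermediate arguments stay in compact interior subsets of $[0,1]$ with probability tending to one, so that the compact-set uniform statements of \autoref{prop:hhat} apply, and (ii) that the $a_{e,n}$ rate is preserved — not merely bounded — under h-function composition. The logarithmic penalty in $b_{n,h}\le n^{-1/6}/\log n$ is engineered precisely to give enough slack for this recursive bound to close strictly faster than the copula fluctuation scale $(nb_{n,c}^2)^{-1/2}$ at every tree level, which is the real workhorse of the proof.
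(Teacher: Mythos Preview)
Your overall architecture matches the paper's: reduce the feasible estimators to the oracle estimators via an induction on the tree level (this is essentially Step~1 of the proof of \autoref{Theory:rate_thm}), then apply the multivariate Lindeberg--Feller CLT to the stacked oracle summands with a Lyapunov-type moment bound. The handling of the marginal block, the remainders $R^{(1)}_e$ and $R^{(2)}_e$, and the bias constants is correct in outline.

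The argument for the diagonal structure of $\Sigma_{\bm x}$, however, does not work as stated. You claim that for $e\neq e'$ the transformed evaluation points $(u_e,v_e)$ and $(u_{e'},v_{e'})$ differ and hence the two shrinking product kernels have disjoint supports. But the summand $Z_{e,i}$ for edge $e$ is a function of the random pair $\bigl(U_{j_e|D_e}^{(i)},U_{k_e|D_e}^{(i)}\bigr)$, whereas $Z_{e',i}$ is a function of the generally \emph{different} pair $\bigl(U_{j_{e'}|D_{e'}}^{(i)},U_{k_{e'}|D_{e'}}^{(i)}\bigr)$. These live in different coordinates of the underlying observation vector, so the ``disjoint support'' heuristic --- which is the standard reason why $\wh f(x)$ and $\wh f(x')$ are asymptotically independent at two \emph{fixed} points $x\neq x'$ of the \emph{same} estimator --- simply does not apply: the product $Z_{e,i}Z_{e',i}$ is nonzero on a set of positive probability regardless of where the evaluation points sit. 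Worse, the evaluation points need not even differ: if $e,e'\in E_1$ share a conditioned index (say edges $(1,2)$ and $(1,3)$), both kernels are localized at the \emph{same} value $z_1=\Phi^{-1}\{F_1(x_1)\}$ in that shared coordinate. The correct mechanism, as in the paper, is a change-of-variables count: distinct edges share at most one of their two pair variables, so the product $Z_{e,i}Z_{e',i}$ involves kernel factors in at least three distinct random coordinates; substituting $s_j=(Z_j^{(i)}-z_j)/b_{n,c}$ in each produces at least one extra factor of $b_{n,c}$ beyond the diagonal case, yielding $n\,\cov(Y_{n,i,e},Y_{n,i,e'})=O(b_{n,c})\to 0$.
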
 \noindent
The asymptotic normality of $\wh f_{\mathrm{vine}}$ follows by an application of the delta method.
\begin{Corollary}
	Under the assumptions of \autoref{prop:an} it holds for all $\bm x \in \R^d$,
	\begin{align*}
	(nb_{n, c}^2)^{1/2}\bigl\{\wh f_{\mathrm{vine}}(\bm x) -  b_{n,c}^2 \bm \theta^\top \bm \mu_{\bm x}  - f(\bm x)\bigr\} \stackrel{d}{\to} \mathcal{N}\bigl(0, \bm \theta^\top \Sigma_{\bm x} \bm \theta\bigr),
	\end{align*}
	where $\theta_k = \prod_{j \neq k}  f^*_j(\bm x)$, $k = 1, \dots, d^*$, and $\bm \mu_{\bm x}$, $\Sigma_{\bm x}$ are as in \autoref{prop:an}.
\end{Corollary}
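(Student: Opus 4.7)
The plan is to derive the result by a standard multivariate delta method applied to the product map $g\colon \R^{d^*}\to\R$, $g(\bm y) = \prod_{k=1}^{d^*} y_k$. This map is infinitely differentiable with gradient $(\nabla g(\bm y))_k = \prod_{j\neq k} y_j$, so in particular $\nabla g(\bm f^*(\bm x)) = \bm\theta$, and by construction $\wh f_{\mathrm{vine}}(\bm x) = g(\wh{\bm f}^*(\bm x))$ and $f(\bm x) = g(\bm f^*(\bm x))$. All the work is then to propagate the joint CLT of \autoref{prop:an} through $g$, taking a little care because the centering in \eqref{joint_an} contains the nonvanishing bias term $b_{n,c}^2 \bm\mu_{\bm x}$.

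First I would write $\wh{\bm f}^*(\bm x) - \bm f^*(\bm x) = b_{n,c}^2 \bm\mu_{\bm x} + (nb_{n,c}^2)^{-1/2} Z_n$, where by \autoref{prop:an} $Z_n \stackrel{d}{\to} \mathcal{N}_{d^*}(0, \Sigma_{\bm x})$. Since $\bm f^*(\bm x)$ has strictly positive entries (the marginals are positive by \ref{M1} and the pair-copula densities are positive on $(0,1)^2$), $g$ is smooth on a neighbourhood of $\bm f^*(\bm x)$. A second-order Taylor expansion of $g$ around $\bm f^*(\bm x)$ then gives
\begin{align*}
g(\wh{\bm f}^*(\bm x)) - g(\bm f^*(\bm x)) = \bm\theta^\top \bigl\{\wh{\bm f}^*(\bm x) - \bm f^*(\bm x)\bigr\} + R_n,
\end{align*}
with $R_n = O_p(\|\wh{\bm f}^*(\bm x) - \bm f^*(\bm x)\|^2)$.

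Substituting the decomposition of $\wh{\bm f}^*(\bm x) - \bm f^*(\bm x)$, rearranging the bias term, and multiplying by $(nb_{n,c}^2)^{1/2}$ yields
\begin{align*}
(nb_{n,c}^2)^{1/2}\bigl\{\wh f_{\mathrm{vine}}(\bm x) - b_{n,c}^2 \bm\theta^\top \bm\mu_{\bm x} - f(\bm x)\bigr\} = \bm\theta^\top Z_n + (nb_{n,c}^2)^{1/2} R_n.
\end{align*}
By the continuous mapping theorem, $\bm\theta^\top Z_n \stackrel{d}{\to} \mathcal{N}(0, \bm\theta^\top \Sigma_{\bm x} \bm\theta)$, so the only remaining task is to show $(nb_{n,c}^2)^{1/2} R_n = o_p(1)$, and then invoke Slutsky.

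The bandwidth bookkeeping is the one place that deserves attention. From \autoref{prop:an}, $\|\wh{\bm f}^*(\bm x) - \bm f^*(\bm x)\| = O_p(b_{n,c}^2 + (nb_{n,c}^2)^{-1/2})$, hence
\begin{align*}
(nb_{n,c}^2)^{1/2} R_n = O_p\bigl((nb_{n,c}^2)^{1/2} b_{n,c}^4 + (nb_{n,c}^2)^{-1/2}\bigr).
\end{align*}
Under the assumption $b_{n,c} = O(n^{-1/6})$, both terms are $O_p(n^{-1/3}) = o_p(1)$, which closes the argument. I expect this bandwidth verification to be the main (and only real) obstacle; everything else is routine delta-method machinery.
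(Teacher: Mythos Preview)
Your proposal is correct and is exactly the paper's approach: the Corollary is obtained from \autoref{prop:an} by the multivariate delta method applied to the product map $g(\bm y)=\prod_k y_k$ (cf.\ also \autoref{Theory:norm_prop}), and you spell out the details the paper omits. One cosmetic remark: you do not need positivity of the pair-copula densities to justify smoothness of $g$ near $\bm f^*(\bm x)$, since $g$ is a polynomial and hence $C^\infty$ on all of $\R^{d^*}$; the positivity claim is neither assumed in the paper nor required.
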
 \noindent

\subsection{Structure selection} \label{Structure}

Finding the optimal structure for vine copulas is extremely difficult. Because of the large number of possibilities, practical approaches are usually based on heuristics. In few situations, expert knowledge can be used to decide which pair-wise dependencies should be modeled explicitly.  If there is no meaningful prior information, the  structure selection algorithm of \citep{Dissmann13} can be adopted. Starting with the first tree, we select the tree that is a maximum (or minimum) spanning tree w.r.t.\ some weight function $w_e$ assigning a weight to each pair of pseudo-observations. The most popular weights are empirical estimates of $\tau_e$,  the (unconditional) Kendall's $\tau$ corresponding to $c_{j_e,k_e;D_e}$. They can be estimated sequentially from the pseudo-observations defined in \autoref{RVineKDE:seqest_alg}. The idea is to choose a structure that captures most of the dependence in lower trees. Other possible weights are the AIC or goodness-of-fit $p$-values corresponding to a pair-copula estimate; see \citep{Czado13} for a discussion. By using kernel density estimators for the pair-copulas, we get a fully nonparametric structure selection algorithm.
	\section{Simulations} \label{Simulations}

In this section, we study the finite sample behavior of a vine copula based kernel density estimator. We illustrate its advantages compared with the classical kernel density estimator in three scenarios that comprise one simplified and two non-simplified target densities. 

\subsection{Implementation of estimators} \label{Simulation:Implementation}

The study was carried out in the statistical computing environment R \citep{R}. We use the implementation of  $\wh f_{\mathrm{vine}}$ introduced in the previous section:
\begin{description}   
	\item {\bfseries Marginal densities} are estimated by the standard kernel density estimator \eqref{def:fhat}. Bandwidths are selected by the plug-in method of \citep{Chacon10}, as implemented in the function \verb|hpi| of the \verb|ks| package \citep{ks}. 
    
    \item {\bfseries Marginal distributions} are estimated by integrating the estimates of the marginal densities.

	\item {\bfseries Pair-copula densities} are estimated by the transformation estimator \eqref{def:trafo} with bandwidth matrix selected by the normal reference rule; see, e.g., Section 3.4 in \citep{Nagler14}.
    
    \item {\bfseries The vine structure} is considered unknown and selected by the method of \citep{Dissmann13} using empirical estimates of $\tau_e$ as weight function (see \autoref{Structure}).
\end{description}
The estimator $\wh f_{\mathrm{vine}}$ is implemented in the R package \texttt{kdevine} \citep{kdevine}. The package also includes estimators for marginals with bounded support as well as more sophisticated pair-copula estimators which further improve the performance. For the classical multivariate kernel density estimator ($\wh f_{\mathrm{mvkde}}$ from here on) we use the function \verb|kde| provided by the \verb|ks| package \citep{ks}. It selects the bandwidths by the plug-in method of \citep{Chacon10}.

\subsection{Performance measurement}

We evaluated the performance of both estimators for three choices of the target density $f$. To gain insight on their convergence behavior under increasing dimension, we consider five different sample sizes $n = 200, 500, 1\,000, 2\, 500, 5\,000$, and three different  dimensions $d=3, 5, 10$. For any fixed target density, sample size, and dimension, we measure the performance as follows:

\begin{enumerate}[label=\arabic*.]
	\item Simulate $n_{sim} = 250$ samples of size $n$, from a $d$-dimensional target density $f$.
	\item On each sample, estimate the density with estimators $\wh f_{\mathrm{vine}}$ and $\wh f_{\mathrm{mvkde}}$.
	\item For each estimator $\wh f \in \bigl\{\wh f_{\mathrm{vine}}, \wh f_{\mathrm{mvkde}}\bigr\}$  and sample, calculate the \emph{integrated absolute error (IAE)} as a performance measure:
	\begin{align*}
	\mathrm{IAE}\bigl(\wh f\bigr) := \int_{\mathds{R}^d}\big\vert \wh f(\bm x) - f(\bm x)\big\vert d\bm x.
	\end{align*}
	The integral is estimated by importance sampling Monte Carlo (e.g., Section 5.2 in \citep{Ripley87}), where we take the true density $f$ as the sampling distribution. The number of Monte Carlo samples was set to $1\, 000$. This gives an unbiased, low-variance estimate of the IAE. 
\end{enumerate}
In the following section we will present the median IAE attained over 250 simulations. Additionally, we use Mood's median test \citep{Gibbons04} to check whether the difference in performance is statistically significant at the 1\% level. Significant results will be indicated by stars above sample size axes of \autoref{Simulations:fig}.

\subsection{Results}

In the following, we illustrate the main insights of our numerical experiments in three examples --- one where the simplifying assumption holds, and two where it does not. Since the simplifying assumption is a property of the copula, we focus on this part and set the marginal densities to standard Gaussian in all scenarios. For these margins, the two estimators $\wh f_{\mathrm{vine}}$ and $\wh f_{\mathrm{mvkde}}$ are asymptotically equivalent when $d=2$. But they become different as soon as the simplifying assumption becomes relevant, i.e., when $d > 2$. Hence, differences in the performance of the two estimators can be directly related to the fact that $\wh f_{\mathrm{vine}}$ assumes a simplified model.   Additional simulation results for common parametric copula families (both simplified and non-simplified) and varying strength of dependence are provided in the online supplement. \todo{Reference}

\subsubsection*{Scenario 1: Gaussian Copula}

\begin{figure}[p]
	\centering 
	\subfloat[Gaussian copula]{\includegraphics[width=\textwidth]{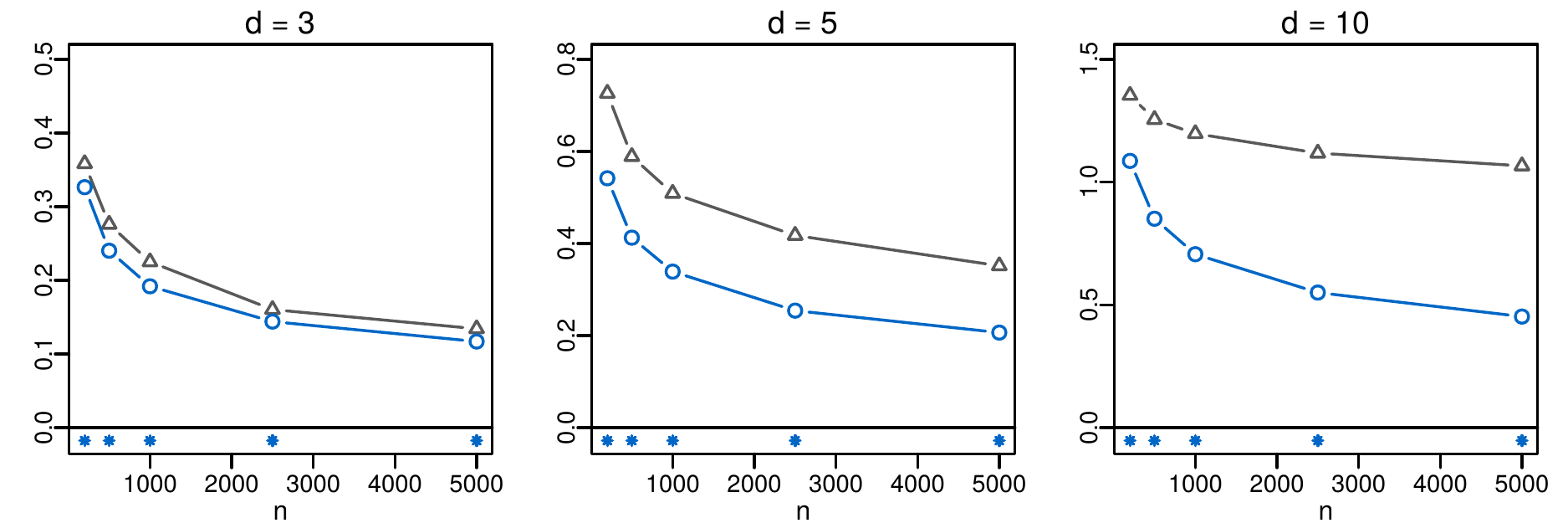} \label{Simulations:1_fig}} \\
 	\subfloat[Gumbel  copula]{\includegraphics[width=\textwidth]{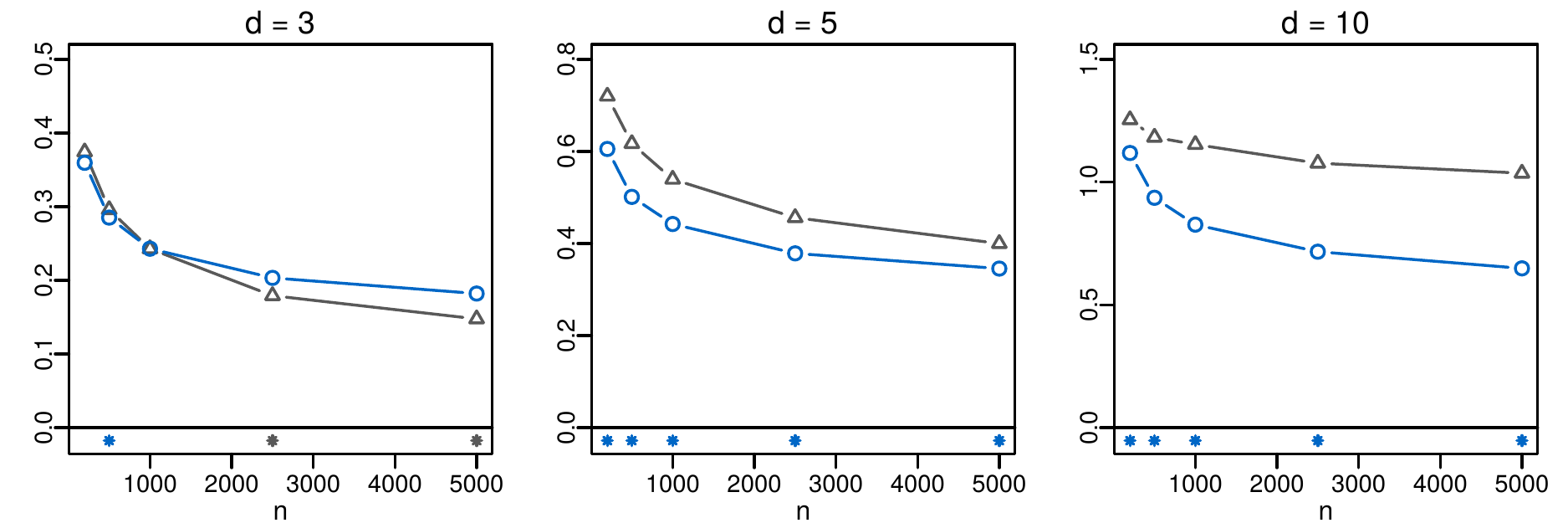} \label{Simulations:2_fig}} \\
    	\subfloat[Non-simplifed Gaussian vine]{\includegraphics[width=\textwidth]{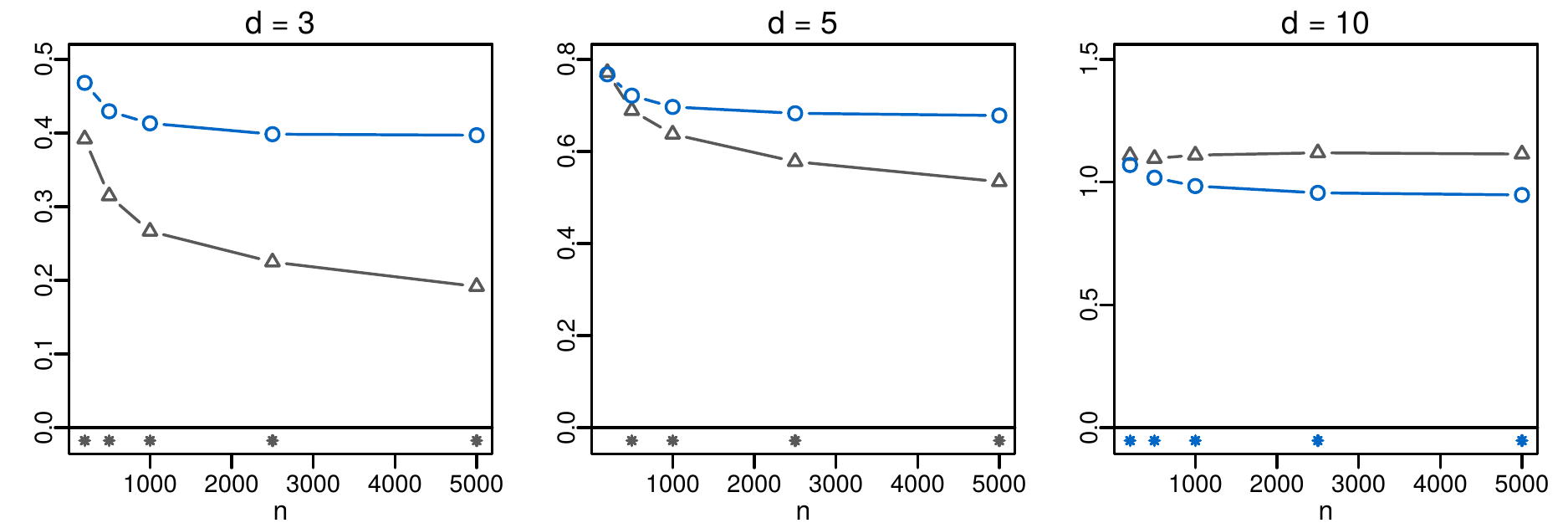} \label{Simulations:3_fig}} \\
	\caption{Median integrated absolute error achieved for varying sample size $n$ and dimension $d$. The estimator $\wh f_{\mathrm{vine}}$ is indicated by circles; $\wh f_{\mathrm{mvkde}}$ by triangles. A star above the sample size means that the corresponding medians were found significantly different at the 1\% level by Mood's median test.} \label{Simulations:fig}
\end{figure}

The first scenario concerns the estimation of a $d$-dimensional Gaussian density. For simplicity, we choose the parameters such that all pair-wise Kendall's $\tau$ equal 0.4 (this corresponds to an association parameter of $\rho \approx 0.6$). Recall that the simplifying assumption is a property of the dependence, i.e.\ the copula. The copula underlying a multivariate Gaussian density is the Gaussian copula which belongs to the class of simplified vine distributions \citep{Stoeber13}. Consequently, the vine copula based estimator is consistent in this situation.

\autoref{Simulations:1_fig} shows the median IAE of $\wh f_{\mathrm{vine}}$ (circles) and $\wh f_{\mathrm{mvkde}}$ (triangles) for varying sample size $n$ and dimension $d$. The vine copula based estimator strictly outperforms the classical estimator by a considerable margin. The difference in IAEs is statistically significant for all dimensions and sample sizes. As predicted by \autoref{Theory:rate_thm}, we observe that --- in contrast to the classical kernel density estimator --- the vine copula based estimator converges at the same rate independent of dimension. Thus, the gap widens as dimension or sample size increase. For $d=5$, $\wh f_{\mathrm{vine}}$ is almost two times as accurate; for $d=10$ almost three times as accurate.  These numbers are remarkable considering how slowly $\wh f_{\mathrm{mvkde}}$ can improve its accuracy when increasing sample size. The same conclusions can be drawn from the additional simulation results for simplified models provided in the online supplement. \todo{Reference}

\subsubsection*{Scenario 2: Gumbel copula}

Our second scenario, a Gumbel copula coupled with standard normal margins, violates the simplifying assumption; see Theorem 3.1 in \citep{Stoeber13}. Again, we choose the parameter of the Gumbel copula such that all pair-wise Kendall's $\tau$ equal 0.4 (this corresponds to a Gumbel copula parameter $\theta \approx 1.67$). In this case, $\wh f_{\mathrm{mvkde}}$ is guaranteed to outperform $\wh f_{\mathrm{vine}}$ as $n \to \infty$, because the latter is not consistent. On finite samples, however, the picture seems to be different.

The performance of the two estimators in this scenario is displayed in \autoref{Simulations:2_fig}. For $d=3$, $\wh f_{\mathrm{vine}}$ is slightly worse than its competitor, but the difference is only significant for large sample sizes. For increasing dimension, the gap widens in favor of $\wh f_{\mathrm{vine}}$ which performs significantly better for $d =5$ and $d= 10$. For $d=10$ and $n = 5\,000$, the vine copula based estimator is almost two times as accurate --- although it is not consistent. Since $\wh f_{\mathrm{mvkde}}$ converges so slowly, an extremely large number of observations would be required until it becomes the better choice. But for commonly available sample sizes and $d>3$, the vine copula based estimator is preferable. The same findings hold for the additional simulation results for non-simplified models provided in the online supplement.

\subsubsection*{Scenario 3: Non-simplified Gaussian vine}

Lastly, we want to investigate how the vine copula based estimator behaves in a sort of `worst case scenario'. We set up a non-simplified vine copula with Gaussian pair-copulas and formulate their parameters as a regression on the conditioning variables implied by the vine. For each conditional pair-copula, the correlation parameter function $\rho_e\colon [0,1]^{|D_e|} \to [-1, 1]$ describes a linear hyperplane ranging from $-1$ to $1$:
\begin{align*}
	\rho_e(\bm u_{D_e}) = 1 - \frac 2 {\vert D_e \vert} \sum_{j \in D_e} u_{j}, \qquad \mbox{for } e \in E_m,\, m \ge 2.
\end{align*}
Since $\int \rho_e(\bm u_{D_e}) d\bm u_{D_e} = 0$ for all $e \in E_2, \dots, E_{d-1}$, we also set $\rho_e \equiv 0$ for $e \in E_1$. This model is severely violating the simplifying assumption for each conditional pair in the vine.

The results for this scenario are shown in \autoref{Simulations:3_fig}. The vine copula based estimator performs significantly worse for $d=3, 5$. Remarkably, $\wh f_{\mathrm{vine}}$ manages to significantly outperform the classical estimator for $d=10$. The severely non-simplified dependence structure appears to be too difficult to identify even for a nonparametric estimator that does not rely on the simplifying assumption. Extrapolating the curves, we can expect that to hold for sample sizes much larger than those considered in our study.  Also, we can expect the advantage of $\wh f_{\mathrm{vine}}$ to become even bigger in higher dimensions. We can conclude that even in this extremely unfavorable example, the estimator $\wh f_{\mathrm{vine}}$ proves useful when more than a few variables are involved.

	\section{Application} \label{Application}

 We revisit a classification problem from astrophysics which has previously been investigated in \citep{Bock04}.
 In their study, the authors consider synthetic data imitating measurements taken on images from the \emph{MAGIC (Major Atmospheric Gamma-ray Imaging Cherenkov) Telescopes} located on the Canary islands. The goal is to identify primary gamma rays (the signal) amongst a large amount of hadron showers (background noise). The authors of the study evaluate the performance of several classification methods and judge the kernel density based Bayes classifier as one of the most convincing. We aim to augment their results and investigate how the vine copula based kernel density estimator performs on this problem.
 
 The data set is available from the \emph{UCI Machine Learning Repository} web page (url: \url{https://archive.ics.uci.edu/ml/datasets/MAGIC+Gamma+Telescope}) and consists of $n=19\,020$ observations on $d=10$ variables. $n_G = 12\,332$ of the observations are classified as gamma (signal) and $n_H=6\,688$ as hadron (background). For more information on the astrophysical background and a more thorough description of the data we refer the reader  to \citep{Bock04} and the UCI web page.
 
 Bayes classifiers follow the idea of maximizing the posterior probability of a class given the data. Let $G$ (for gamma) and $H$ (for hadron) be the two classes and $\wh f_G$ and $\wh f_H$ be two estimates fitted separately in each class. Assume further we have knowledge of the class prior probabilities $\pi_G, \pi_H$. With a straightforward application of Bayes' theorem, we can estimate the posterior probability that the class is $G$ as
 \begin{align}
 \wh \Prob(\mbox{Class} = G | \bm X  = \bm x) = \frac{\pi_G \wh f_G(\bm x)}{\pi_G \wh f_G(\bm x) + \pi_H \wh f_H(\bm x)}, \label{Application:Bayes_eq}
 \end{align}
 where $\bm x$ is a realization of the random vector $\bm X$. In the most general case, we classify an observation as $G$ whenever the estimated posterior probability is greater than $\alpha = 0.5$. However, by changing the threshold $\alpha$ we can furthermore  control how many observations get classified as $G$, and thereby influence key quantities such as the \emph{false positive rate (FPR)} or \emph{true positive rate (TPR)}. The FPR is defined as the ratio of the number of false positives (here: hadron events that were misclassified as gamma) and the number of all negative (hadron) events. The TPR is defined as the ratio of the number of correctly classified positive (gamma) events and the number of all positive events. In general, it is desirable to have a low FPR and a high TPR. But usually, there is a tradeoff between the two quantities: If we increase the threshold level $\alpha$, a higher posterior probability is required for an observation to get classified as gamma event. As a result, less observations will be classified as gamma event, which in turn reduces \emph{both} FPR and TPR.

 \begin{figure}[t]
 	\centering
 	\includegraphics[width=0.85\linewidth]{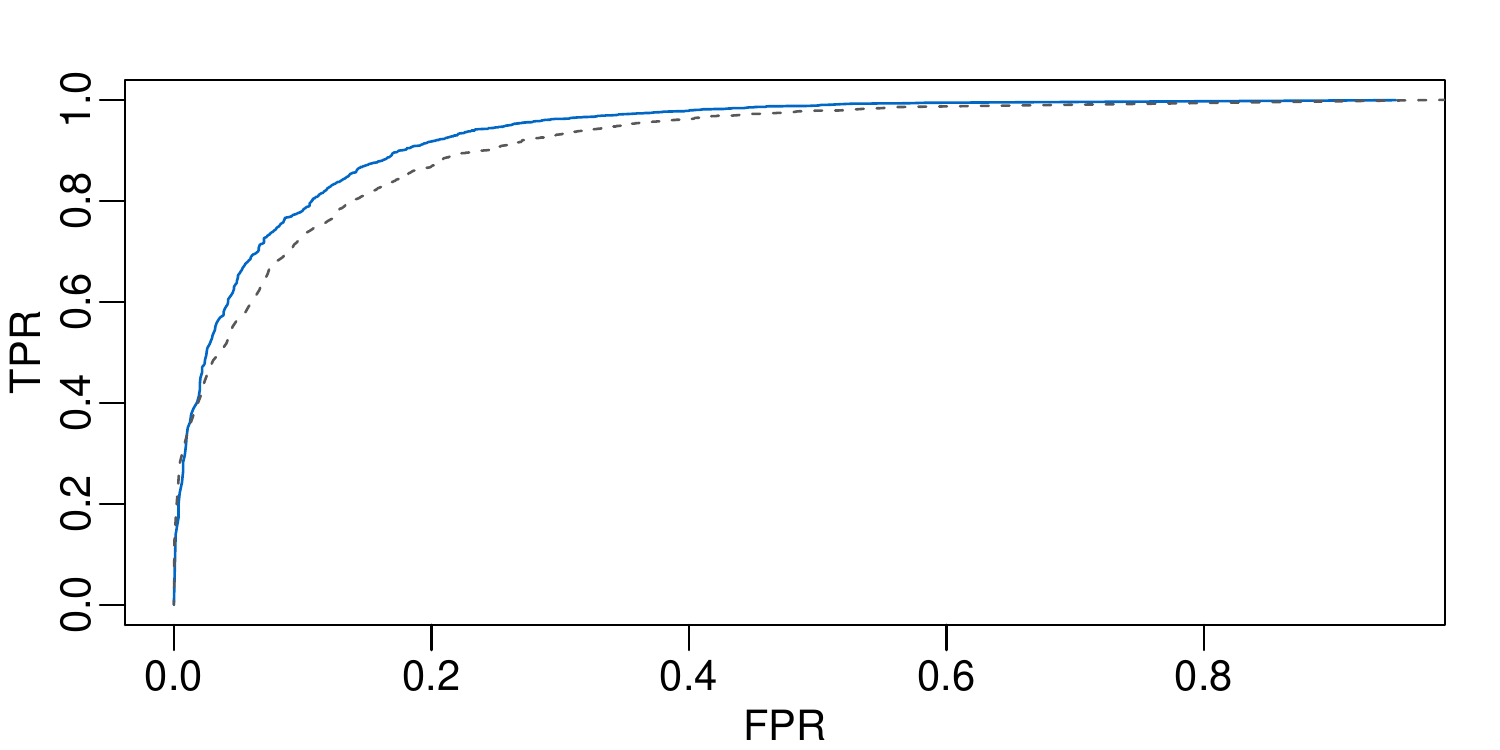}
 	\caption{ROC curves for Bayes classifiers based on the vine copula based estimator (solid line) and classical multivariate kernel density estimator (dashed line).}
 	\label{Application:ROC_fig}
 \end{figure}
 
We repeat the experiment of \citep{Bock04} with the vine copula based and classical kernel estimators. The implementations are similar to our simulation study (see \autoref{Simulation:Implementation}). As is common in applications, we induce sparsity of the estimated model by adding an independence test to the structure selection algorithm; see Section 4 in \citep{Dissmann13}. We also found it necessary to multiply the marginal bandwidth parameters of $\wh f_{\mathrm{vine}}$ by 2 to stabilize the classification boundary in low-density regions. The experiment's setup is the following: First, the densities for each class are estimated on the first 2/3 of the data which is used as training set. These estimates are used in combination with \eqref{Application:Bayes_eq} to obtain class predictions for the remaining 1/3. For simplicity, the prior probabilities are set to $\pi_G=\pi_H=0.5$. The predictions are then compared to the actual class of the observations which allows to asses the quality of the predictions.

\autoref{Application:ROC_fig} shows the \emph{receiver operating characteristic (ROC) curve} which displays the TPR as a function of the FPR.  It was noted in \citep{Bock04} that in this application the focus is on low FPR level; in particular the 0.01, 0.02, 0.05, 0.1 and 0.2 levels. The TPR values of the ROC curves at these levels are additionally displayed in \autoref{Application:TPR_tab}. 
The ROC curve of the vine copula based estimator lies above the curve of the classical multivariate kernel density estimator almost everywhere. This means that for a target FPR level, the vine copula based classifier is able to identify more observations correctly as signal events than the classical multivariate kernel density estimator. The results confirm what we could expect from our simulation study where, for $d=10$ and several thousand observations, the vine copula based approach delivered much more accurate estimates.

\begin{table}[t]
	\centering 
	\begin{tabular}{r|ccccc}
		FPR & 0.01 & 0.02 & 0.05 & 0.1 & 0.2 \\ \hline 
		vine &  0.335 & 0.428 & 0.652 & 0.780 & 0.918 \\
        mvkde & 0.335 & 0.408 & 0.567 & 0.730 & 0.868
	\end{tabular}
	\caption{True positive rates for the two estimators (second and third row) for given target levels of the false positive rate (first row).}
	\label{Application:TPR_tab}
\end{table}

But also in comparison with other classification algorithms, the classifier based on $\wh f_{\mathrm{vine}}$ performs extraordinary well. A total of 14 algorithms were surveyed in \citep{Bock04}, including variants of classification trees and neural networks, as well as the popular nearest-neighbor method and support vector machine. Two of the main performance measures used in their study are the average of the TPR at the 0.01, 0.02 and 0.05 FPR levels (termed \emph{loacc}), and the average of the TPR at the 0.1 and 0.2 FPR levels (termed \emph{highacc}). From \autoref{Application:TPR_tab} we calculate $loacc = 0.472$ and $highacc = 0.849$. None of the 14 algorithms was able to produce a better \emph{loacc} value than our approach. And only one method, random forests, delivered a slightly higher \emph{highacc} of $0.852$. This is particularly remarkable when we consider that the parameterization of our estimator was not tuned with respect to classification accuracy (unlike other classification algorithms). It might well be that the performance can be further improved by bandwidth and structure selection strategies that aim for classification rather than estimation accuracy.
	\section{Further discussion} \label{Conclusion}

In this paper, we discuss a vine copula approach to nonparametric density estimation. By assuming that the target density belongs to the class of simplified vine densities, we can divide the estimation of a $d$-dimensional density into several one- and two-dimensional tasks. This allows us to achieve faster  convergence rates than classical nonparameteric estimators when $d>3$. In particular, the speed of convergence is independent of dimension. The advantages of this approach become more and more striking as dimension increases. It shows that a simplified  vine model for the dependence between variables is an appealing structure for nonparametric problems. For example, we can expect that similar results can be obtained for copula-based regression models \citep{Noh13, Kraus15}.

The crunchpoint in our approach is the simplifying assumption. If the simplifying assumption is not satisfied, the proposed estimator is not consistent --- but can nevertheless outperform its competitor in most practicable situations. However, the latter finding may not be true if the simplifying assumption is violated in an extreme fashion and dimension is small. We guess that this is a rather unlikely situation to encounter in real data. However, appropriate tests for a formal empirical assessment have yet to be developed. From a theoretical point of view, this answer is highly unsatisfying and several urging questions arise:
\begin{itemize}
    \item How dense does the set of simplified densities lie in the set of all densities? Put differently: how far off can we be by assuming a simplified model?
	\item How can we interpret the components of an estimated simplified model when the assumption does not hold?
\end{itemize}

Owing to the infancy of vine copula models, these questions remain open to this day. But several recent works have advanced the understanding of the simplifying assumption.  A discussion of its appropriateness can be found in \citep{Haff10}. Copula classes where the simplifying assumption is satisfied are given in \citep{Stoeber13}. In \citep{Gijbels15}, a general estimator of the copula was proposed for the case where a covariate affects only the marginal distributions (i.e., when the simplifying assumption does hold). Semiparametric estimation of three-dimensional non-simplified PCCs was tackled in \citep{Acar12};  a test for the simplifying assumption was proposed in \citep{Acar13} under a semiparametric model. The empirical pair-copula, an extension of the empirical copula to simplified vine copulas, was analyzed in \citep{Haff15}. The authors conjecture that this estimator converges at the parametric rate --- even when pseudo-observations are used. The situation is different from ours since empirical copulas do not suffer the curse of dimensionality.

The notion of \emph{partial vine copula approximations (PVCA)}, i.e., the limit of a step-wise estimator under a simplified model, was introduced in \citep{Spanhel15}. The authors show that the PVCA is not necessarily the best simplified approximation to the true density. They further illustrate in an example that spurious dependence patterns can appear in trees $T_m, m \ge 3,$ when the simplifying assumptions has falsely been assumed in previous trees $T_{m'}, 2 \le m' \le m$. This property may not matter much in terms of estimation accuracy, but can corrupt the interpretability of an estimated PVCA. The estimator proposed in this paper is in fact an estimator of the PVCA. Our results suggest that the PVCA is a useful inferential object in any case:
\begin{itemize}
	\item Any $d$-dimensional PVCA can be consistently estimated at a rate that is equivalent to a two-dimensional problem.
	\item If the simplifying assumption does hold, the PVCA coincides with the true density.
    \item If the simplifying assumption does not hold, inference of the PVCA is still less difficult than inference of the actual density. This led to the following observation: On finite samples, a consistent estimate of the PVCA can be much closer to the true density than a consistent estimate of density itself (see Scenario 2 in \autoref{Simulations}).
\end{itemize}

A related perspective on the phenomenon is that the simplifying assumption allows us to achieve more accurate estimates by model shrinkage. We incorporate the additional  `information' that the simplifying assumption is at least approximately true. This allows us to reduce the set of possible solutions and thereby make the estimation problem `less difficult'. The most well known example of a shrinkage estimator is the sample variance. When dividing by $n$ instead of $n-1$ we give up unbiasedness of the estimator in order to achieve a smaller error. The same holds true for the vine copula based density estimator: if we make the simplifying assumption although it is not satisfied, we introduce additional bias. In fact, we even give up consistency of the estimator in order to achieve better finite-sample accuracy.

The main advantage of the vine copula based approach is striking: Classical multivariate nonparametric density estimators converge very slowly to the true density when more than a few variables enter the model. Hence, one was unable to benefit from the increasing number of observations in modern data. A vine copula based estimator, on the other hand, converges at a high speed, no matter how many variables are involved. This makes it particularly appealing in the age of big data.

\subsection*{Acknowledgements}
The first author acknowledges financial support by a research stipend of the Technische Universit{\"a}t M{\"u}nchen. The second author is supported by the German Research Foundation (DFG grant CZ 86/4-1). The authors thank two anonymous referees whose remarks led to a considerably improved contribution.

	\appendix

\section{Proof of Theorem 1} \label{Appendix}

\setlength{\parskip}{0pt}

The proof consists of three steps. In the first step, we show by induction that all pseudo-observations converge sufficiently fast to the true observations. In the second step, we establish pointwise consistency of the feasible pair-copula density estimators $\wh c_{j_e,k_e;D_e}$ and conditional distribution function estimators $\wh F_{j_e|D_e}$ and $\wh F_{k_e|D_e}$. In the last step, we combine these results to establish the consistency of $\wh f_{\mathrm{vine}}$.


\subsubsection*{Step 1: Convergence of pseudo-observations}

We will show by induction that for all $e \in E_1, \dots, E_{d-1}$, $i = 1, \dots, n$,
\begin{align} \label{Theory:Uproof_eq}
	\begin{aligned}
		  \wh U_{j_e|D_e}^{(i)} - U_{j_e|D_e}^{(i)} = o_{a.s.}(n^{-r}) , \quad  \wh U_{k_e|D_e}^{(i)} - U_{k_e|D_e}^{(i)}  = o_{a.s.}(n^{-r}).
	\end{aligned} 
\end{align}
Let $e \in E_1$ (the conditioning set $D_e$ is empty). Because of \hyperref[Theory:f_ass]{\ref*{Theory:f_ass}b} we have,
\begin{align*} 
 \bigl\vert \wh U_{j_e}^{(i)} - U_{j_e}^{(i)} \bigr\vert &=  \bigl\vert \wh F\bigl(X_{j_e}\bigr) - F\bigl(X_{j_e}\bigr) \bigl\vert \le \sup_{x_{j_e} \in \Omega_{X_{j_e}}}  \bigl\vert \wh F(x_{j_e}) - F(x_{j_e}) \bigl\vert  = o_{a.s.}(n^{-r}),
\end{align*}
and the same argument applies to the second equality of  \eqref{Theory:Uproof_eq}.
Now consider $e \in E_m$, $1 \le m \le d-2$, and assume that \eqref{Theory:Uproof_eq} holds for all $e \in E_{m}$. Recall that all pseudo-observations for $e^\prime \in E_{m+1}$ can be written as $\wh U_{j_e | D_e \cup k_e}^{(i)}$ or $\wh U_{k_e | D_e \cup j_e}^{(i)}$ for some $e \in E_m$. By the definition of the pseudo-observations and the triangle inequality,
\begin{eqnarray*}
	\bigl\vert \wh U_{j_e | D_e \cup k_e}^{(i)}  - U_{j_e | D_e \cup k_e}^{(i)}  \bigr\vert 
    &=& \bigl\vert \wh h_{j_e | k_e; D_e}\bigl\{\wh U_{j_e | D_e}^{(i)}  | \wh U_{k_e | D_e}^{(i)} \bigr\} -  h_{j_e | k_e; D_e}\bigl\{U_{j_e | D_e}^{(i)}  | U_{k_e | D_e}^{(i)} \bigr)\}\bigr\vert \\
     &\le & \phantom{+ \;} \bigl\vert \wh h_{j_e | k_e; D_e}\bigl\{\wh U_{j_e | D_e}^{(i)}  | \wh U_{k_e | D_e}^{(i)} \bigr\} -  \overline h_{j_e | k_e; D_e}\bigl\{\wh U_{j_e | D_e}^{(i)}  | \wh U_{k_e | D_e}^{(i)} \bigr\}  \bigr\vert \\
    & &+ \; \bigl\vert \overline h_{j_e | k_e; D_e}\bigl\{\wh U_{j_e | D_e}^{(i)}  | \wh U_{k_e | D_e}^{(i)} \bigr\} -  h_{j_e | k_e; D_e}\bigl\{\wh U_{j_e | D_e}^{(i)}  | \wh U_{k_e | D_e}^{(i)} \bigr\}  \bigr\vert \\
       & & + \; \bigl\vert h_{j_e | k_e; D_e}\bigl\{\wh U_{j_e | D_e}^{(i)}  | \wh U_{k_e | D_e}^{(i)} \bigr\} -   h_{j_e | k_e; D_e}\bigl\{U_{j_e | D_e}^{(i)}  | U_{k_e | D_e}^{(i)} \bigr\}  \bigr\vert \\
       &=&  H_{1,n} + H_{2, n} + H_{3, n}
\end{eqnarray*}
Note that, almost surely, each realization of $(U_{j_e | D_e}^{(i)},  U_{k_e | D_e}^{(i)})$ is  contained in $[\delta_i, 1 - \delta_i]^2$ for $\delta_i := \min\bigl\{U_{j_e | D_e}^{(i)}, U_{k_e | D_e}^{(i)}, 1 - U_{j_e | D_e}^{(i)}, 1 - U_{k_e | D_e}^{(i)} \bigr\} >0$. And by invoking \eqref{Theory:Uproof_eq} we see that for sufficiently large $n$, also each realization of $(\wh U_{j_e | D_e}^{(i)},  \wh U_{k_e | D_e}^{(i)})$ is contained in $[\delta_i/2, 1 - \delta_i/2]^2$. Together with  \hyperref[Theory:oracle_ass]{\ref*{Theory:oracle_ass}b} and \hyperref[Theory:bound_ass]{\ref*{Theory:bound_ass}b} this yields for large $n$,
\begin{align*}
	H_{1, n} &\le \sup_{(u, v) \in [\delta_i/2, 1-\delta_i/2]^2} \bigl\vert \wh h_{j_e|k_e;D_e}(u|v) - \overline h_{j_e|k_e;D_e}(u|v) \bigr\vert = O_{a.s.}(a_{e,n}), \\
	H_{2, n} &\le \sup_{(u, v) \in [\delta_i/2, 1-\delta_i/2]^2} \bigl\vert \overline h_{j_e|k_e;D_e}(u|v) - h_{j_e|k_e;D_e}(u|v) \bigr\vert = o_{a.s.}(n^{-r}),
\end{align*}
and invoking \eqref{Theory:Uproof_eq},
\begin{align*}
a_{e,n} = \sup_{i = 1, \dots, n} \vert\wh U_{j_e|D_e}^{(i)} - U_{j_e|D_e}^{(i)}\bigr\vert +  \bigl\vert\wh U_{k_e|D_e}^{(i)} - U_{k_e|D_e}^{(i)}\bigr\vert = o_{a.s.}(n^{-r}),
\end{align*} 
which gives $H_{1,n} = o_{a.s.}(n^{-r})$. It remains to show that $H_{3,n} = o_{a.s.}(n^{-r})$. Let $\nabla h_{j_e | k_e; D_e}$ denote the gradient of $h_{j_e | k_e; D_e}$. A first-order Taylor approximation of $h_{j_e | k_e; D_e}\bigl(\wh U_{j_e | D_e}^{(i)}  | \wh U_{k_e | D_e}^{(i)} \bigr)$ around $\bigl(U_{j_e | D_e}^{(i)}, U_{k_e | D_e}^{(i)}\bigr)$ yields
\begin{align*}
	H_{3, n} &\le \biggl\vert \nabla^\top h_{j_e | k_e; D_e}\bigl(U_{j_e | D_e}^{(i)}  | U_{k_e | D_e}^{(i)}\bigr)  \begin{pmatrix}
	\wh U_{j_e | D_e}^{(i)} -  U_{j_e | D_e}^{(i)} \\
     \wh U_{k_e | D_e}^{(i)} -  U_{k_e | D_e}^{(i)}
\end{pmatrix} \biggr\vert  + o_{a.s.}\begin{pmatrix}
	\wh U_{j_e | D_e}^{(i)} -  U_{j_e | D_e}^{(i)} \\
     \wh U_{k_e | D_e}^{(i)} -  U_{k_e | D_e}^{(i)}
\end{pmatrix}.
 \end{align*}
 	Invoking \eqref{Theory:Uproof_eq}, we get $H_{3, n} = o_{a.s.}(n^{-r})$. This establishes the first equality of \eqref{Theory:Uproof_eq} for all $e \in E_{m+1}$. The second equality follows by symmetric arguments and the induction is complete.
\subsubsection*{Step 2: Consistency of conditional \emph{cdf} and pair-copula density estimators}

With arguments almost identical to those in Step 1, we can furthermore show that for all $e \in E_1, \dots, E_{d-1}$, and all $\bm x \in \Omega_{\bm X}$,
\begin{align} 
	\begin{aligned}\label{Theory:Fproof_eq}
		 \wh F_{j_e|D_e}\bigl(x_{j_e} | {\bm x}_{D_e}\bigr) - F_{j_e|D_e}\bigl(x_{j_e} | \bm x_{D_e}\bigr) &= o_p(n^{-r}), \\
		\wh F_{k_e|D_e}\bigl(x_{k_e} |{\bm x}_{D_e}\bigr)- F_{k_e|D_e}\bigl(x_{k_e} | \bm x_{D_e}\bigr) &= o_p(n^{-r}).
	\end{aligned}
\end{align}
Next, we establish that for all $e \in E_1, \dots, E_{d-1}$, and all $(u, v) \in (0, 1)^2$,
\begin{align}  \label{Theory:cproof_eq}
 \wh c_{j_e,k_e;D_e}\bigl(u, v\bigr) -  c_{j_e,k_e;D_e}\bigl(u, v\bigr) = O_p(n^{-r}).
\end{align}
The triangle inequality gives
\begin{eqnarray*}
    & & \bigl\vert \wh c_{j_e,k_e;D_e}\bigl(u, v\bigr) -  c_{j_e,k_e;D_e}\bigl(u, v\bigr)\bigr\vert  \\
    &\le & \bigl\vert \wh c_{j_e,k_e;D_e}\bigl(u, v\bigr) - \overline c_{j_e,k_e;D_e}\bigl(u, v\bigr)\bigr\vert + \bigl\vert \overline c_{j_e,k_e;D_e}\bigl(u, v\bigr) -  c_{j_e,k_e;D_e}\bigl(u, v\bigr)\bigr\vert  \\
       &=& R_{n,1} + R_{n, 2}.
\end{eqnarray*}
We have $R_{n, 1} = o_{a.s.}(n^{-r})$ by \hyperref[Theory:bound_ass]{\ref*{Theory:bound_ass}a} and \eqref{Theory:Uproof_eq}, whereas $R_{n, 2} = O_p(n^{-r})$ by  \hyperref[Theory:oracle_ass]{\ref*{Theory:oracle_ass}a}.

\subsubsection*{Step 3: Consistency of the vine copula based density estimator}

The consistency of $\wh f_{\mathrm{vine}}$ now follows from \eqref{Theory:cproof_eq} and \hyperref[Theory:f_ass]{\ref*{Theory:f_ass}a} (second equality) together with \eqref{Theory:Fproof_eq} and the fact that $c_{j_e, k_e; D_e}$ is continuously differentiable (third equality):
\begin{align*} \allowdisplaybreaks[1]
	\wh f_{\mathrm{vine}}(\bm x) &= \prod_{k=1}^{d-1} \prod_{e \in E_k} \wh c_{j_e, k_e; D_e} \bigl\{\wh F_{j_e|D_e}(x_{j_e}|{\bm x}_{D_e}), \, \wh F_{k_e|D_e}(x_{k_e}|{\bm  x}_{D_e})\bigr\} \times \prod_{j=1}^d \wh f_j(x_j)\\ 
	&= \prod_{k=1}^{d-1} \prod_{e \in E_k}  \biggl[c_{j_e, k_e; D_e} \bigl\{\wh F_{j_e|D_e}(x_{j_e}|{\bm x}_{D_e}), \,  \wh F_{k_e|D_e}(x_{k_e}|{\bm  x}_{D_e})\bigr\} + O_p(n^{-r})\biggr] \\ 
& \phantom{=} \times \prod_{j=1}^d \bigl\{f_j(x_j) +O_p(r^{-r})\bigr\}\\
	&= \prod_{k=1}^{d-1} \prod_{e \in E_k}  \biggl[c_{j_e, k_e; D_e} \bigl\{ F_{j_e|D_e}(x_{j_e}|{\bm x}_{D_e}), \,  F_{k_e|D_e}(x_{k_e}|{\bm  x}_{D_e})\bigr\} + O_p(n^{-r}) + o_p(n^{-r})\biggr] \\ 
	& \phantom{=} \times \prod_{j=1}^d \bigl\{f_j(x_j) +O_p(n^{-r})\bigr\}\\\ 
	&=  f(\bm x)  +  O_p(n^{-r}). \tag*{\qed}
\end{align*}


\section{Lemmas} \label{Appendix2}


\subsection{Notation} \label{sec:not}
To ease notation in the following proofs, we write 
$(u,v) = (w_1, w_2) = \bigr(\Phi(z_1), \Phi(z_2)\bigl)$,
\begin{align} \label{not_obs_eq}
	W_1^{(i)} := U_{j_e|D_e}^{(i)}, \;   W_2^{(i)} := U_{k_e|D_e}^{(i)}, \;   Z_1^{(i)} := \Phi^{-1}\bigl(U_{j_e|D_e}^{(i)}\bigr), \;   Z_2^{(i)} := \Phi^{-1}\bigl(U_{k_e|D_e}^{(i)}\bigr).
\end{align}
In this notation, the (oracle) transformation pair-copula density estimator is
\begin{align*} 
	\overline c(u,v) = \overline c\bigl\{\Phi(z_1), \Phi(z_2)\bigr\} = \frac{1}{n} \sum_{i=1}^{n} \frac{K_{b_n}\bigl(z_1 - Z_1^{(i)}\bigr) K_{b_n}\bigl(z_2 - Z_2^{(i)}\bigr)}{\phi(z_1)\phi(z_2)}. 
	\end{align*}
    The corresponding (oracle) h-function estimator $\overline h$ is obtained by integration of $\overline c$:
	\begin{align} \label{not_h_eq}
	\overline h(u |v) = \overline h\bigl\{\Phi(z_1) | \Phi(z_2)\bigr\} &= \frac{1}{n} \sum_{i=1}^{n} \frac{J_{b_n}\bigl(z_1 - Z_1^{(i)}\bigr) K_{b_n}\bigl(z_2- Z_2^{(i)}\bigr)}{\phi(z_2)}, 
	\end{align}
    where $J_{b_n}(\cdot) = \int_{-\infty}^\cdot K_{b_n}(s) ds$. The feasible estimators $\wh c$ and $\wh h$ are obtained by replacing $W_j^{(i)}$ and $Z_j^{(i)}$ with pseudo-observations $\wh W_j^{(i)}$ and $\wh Z_j^{(i)} := \Phi^{-1}(\wh W_j^{(i)})$. Finally, we write
        \begin{align*} 
    	 a_{n} = \sup_{i \in \{1, \dots, n\}} \bigl\vert\wh W_1^{(i)} - W_1^{(i)}\bigr\vert + \sup_{i \in \{1, \dots, n\}} \bigl\vert\wh W_2^{(i)} - W_2^{(i)}\bigr\vert.
	\end{align*}
    
\subsection{Results}
\
\begin{Lemma} \label{lem:c}
Under conditions \ref{K1}, \ref{K2}, \ref{C1},  and \ref{C2} it holds for all $(u,v) \in (0,1)^2$, 
	\begin{align*}
		\wh c(u,v) = \overline c(u, v) + O_{a.s}(a_n). 
        \end{align*}
\end{Lemma}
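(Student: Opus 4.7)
The plan is a first-order Taylor/Lipschitz expansion of the transformation estimator $\overline c$ in the perturbed inputs $\wh Z_j^{(i)} = \Phi^{-1}(\wh W_j^{(i)})$, coupled with a localization that controls the unbounded derivative of $\Phi^{-1}$ near $\{0,1\}$. I would begin by writing
\begin{align*}
\wh c(u,v) - \overline c(u,v) = \frac{1}{n\,\phi(z_1)\phi(z_2)} \sum_{i=1}^n \bigl[K_{b_n}\bigl(z_1 - \wh Z_1^{(i)}\bigr) K_{b_n}\bigl(z_2 - \wh Z_2^{(i)}\bigr) - K_{b_n}\bigl(z_1 - Z_1^{(i)}\bigr) K_{b_n}\bigl(z_2 - Z_2^{(i)}\bigr)\bigr],
\end{align*}
and splitting the bracket via the bilinear identity $a_1 b_1 - a_2 b_2 = (a_1-a_2) b_1 + a_2(b_1-b_2)$ so that only one coordinate changes at a time; by symmetry a single piece has to be analyzed.

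A mean value estimate using \ref{K1} (bounded, continuous $K'$ on compact support $[-1,1]$) yields
\begin{align*}
\bigl|K_{b_n}\bigl(z_j - \wh Z_j^{(i)}\bigr) - K_{b_n}\bigl(z_j - Z_j^{(i)}\bigr)\bigr| \le \|K'\|_\infty\, b_n^{-2}\, \bigl|\wh Z_j^{(i)} - Z_j^{(i)}\bigr|,
\end{align*}
while the surviving kernel factor is at most $\|K\|_\infty/b_n$. Converting $|\wh Z_j^{(i)} - Z_j^{(i)}|$ back to the copula-scale error $|\wh W_j^{(i)} - W_j^{(i)}|$ via the mean value theorem on $\Phi^{-1}$ introduces the factor $1/\phi(\Phi^{-1}(\xi))$, which blows up near $\{0,1\}$. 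The key localization is: the kernel factors vanish unless $Z_j^{(i)}$ or $\wh Z_j^{(i)}$ lies in $[z_j - b_n, z_j + b_n]$, and combined with $a_n \to 0$ a.s.\ this confines both $W_j^{(i)}$ and $\wh W_j^{(i)}$ to a fixed compact subinterval of $(0,1)$ for $n$ large; on that interval $1/\phi(\Phi^{-1}(\cdot))$ is bounded by some constant $C(u,v) < \infty$. Hence $|\wh Z_j^{(i)} - Z_j^{(i)}| \le C(u,v) a_n$ almost surely for every index that contributes.

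To close the argument, I would count contributing indices: by \ref{C2} the density $\psi$ of $(Z_1^{(i)}, Z_2^{(i)})$ is bounded in a neighborhood of $(z_1, z_2)$, and a standard Bernstein/Hoeffding concentration applied to the indicator of the $2 b_n$-box around $(z_1, z_2)$ gives a count of $O_{a.s.}(n b_n^2)$ under \ref{K2}. Plugging everything together and dividing by $n$ gives a preliminary bound of $O_{a.s.}(a_n / b_n)$. The main obstacle---and the technical heart of the proof---is sharpening this to the claimed $O_{a.s.}(a_n)$ rate required by Assumption~\ref{Theory:bound_ass}(a). The way out I would pursue is to exploit the cancellation $\int K'(s)\, ds = 0$: the signed average $\frac{1}{n b_n^3}\sum_i K'\bigl((z_1 - Z_1^{(i)})/b_n\bigr) K\bigl((z_2 - Z_2^{(i)})/b_n\bigr)$ is itself a consistent kernel estimator of $\partial_1 \psi(z_1, z_2)$, hence $O_p(1)$ under \ref{K2}, a full factor $b_n$ better than the $O(1/b_n)$ arising from absolute values. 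Decomposing the weighted sum $\sum_i K' K\, (\wh Z_1^{(i)} - Z_1^{(i)})$ into a local-mean piece that multiplies this signed kernel sum, plus a residual controlled by the local smoothness of the pseudo-observation errors over windows of size $b_n$, then recovers the $O_{a.s.}(a_n)$ rate.
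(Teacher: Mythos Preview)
Your approach matches the paper's: Taylor-expand the product kernel in $(Z_1,Z_2)$, use the compact support of $K$ to localize and replace $1/\phi(Z_j^{(i)})$ by a finite constant $\eta_n(\bm z)$ depending only on $(u,v)$, and recognize the surviving sum $\tfrac1n\sum_i\nabla_{\bm z}\{K_{b_n}K_{b_n}\}$ as a kernel estimator of $\nabla_{\bm z}\psi(z_1,z_2)$, hence $O_{a.s.}(1)$ (the paper invokes Theorem~9 of Hansen, 2008, for this last step). The one difference is precisely at the point you flag as the technical heart: where you propose a local-mean-plus-residual decomposition to handle the $i$-dependence of $\wh W_j^{(i)}-W_j^{(i)}$, the paper simply writes $\wh Z_j^{(i)}-Z_j^{(i)} = (1/\phi(Z_j^{(i)}))\times O_{a.s.}(a_n)$, declares the $O_{a.s.}(a_n)$ factor to be ``independent of the index $i$ since the supremum was taken,'' and factors it straight out of the signed sum. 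In effect the paper's argument is your local-mean piece with the residual silently dropped, so your extra caution at this step actually goes beyond what the paper does.
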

\begin{proof}
By a first-order Taylor approximation of $\Phi^{-1}$, $j =1, 2$,
\begin{align} 
\begin{aligned}  \label{Z_eq}
	 \wh Z_j^{(i)} - Z_j^{(i)} &= (\wh W_j^{(i)} - W_j^{(i)})/ \phi(Z_j^{(i)}) + o_{a.s.}(\wh W_j^{(i)} - W_j^{(i)}) \\
     &= 1/\phi(Z_j^{(i)}) \times O_{a.s.}(a_n), 
     \end{aligned}
\end{align}
where the $O_{a.s.}(a_n)$ term does not depend on the index $i$ since the supremum was taken.
Denote $\nabla_{\bm z} = (\partial/\partial z_1, \partial/\partial z_2)^\top$. A first-order Taylor approximation of $K$ yields
	\begin{align*}
     &\phantom{=} \phi(z_1)\phi(z_2) \bigl\vert\wh c \bigl\{\Phi(z_1), \Phi(z_2) \bigr\}- \overline c\bigl\{\Phi(z_1), \Phi(z_2) \bigr\}\bigr\vert \\
  &=  \biggl\vert \frac{1}{n} \sum_{i= 1}^n K_{b_n}\bigl(z_1 - \wh Z_1^{(i)}\bigr) K_{b_n}\bigl(z_2 - \wh Z_2^{(i)}\bigr) -  \frac{1}{n} \sum_{i= 1}^n K_{b_n}\bigl(z_1 - Z_1^{(i)}\bigr) K_{b_n}\bigl(z_2 - Z_2^{(i)}\bigr)  \biggr\vert\\
  &=  \biggl\vert \frac{1}{n} \sum_{i= 1}^n \nabla_{\bm z} \bigl\{K_{b_n}(z_1 -Z_1^{(i)}) K_{b_n}(z_2 - Z_2^{(i)})\bigr\} \begin{pmatrix}
\wh Z_1^{(i)} - Z_1^{(i)} \\
\wh Z_2^{(i)} - Z_2^{(i)}
\end{pmatrix} + o_{a.s.}\biggl\{\begin{pmatrix}
\wh Z_1^{(i)} - Z_1^{(i)} \\
\wh Z_2^{(i)} - Z_2^{(i)}
\end{pmatrix}\biggr\}\biggr\vert \\
&\le  \biggl\vert \frac{1}{n} \sum_{i= 1}^n \nabla_{\bm z} \bigl\{K_{b_n}(z_1 - Z_1^{(i)}) K_{b_n}(z_2 - Z_2^{(i)})\bigr\} \begin{pmatrix}
1 / \phi\bigl(Z_1^{(i)}\bigr) \\
1 / \phi\bigl(Z_2^{(i)}\bigr)
\end{pmatrix} \biggr\vert \times O_{a.s.}(a_n),
	\end{align*}
    where the last inequality is due to \eqref{Z_eq}.  Since  $K_{b_n}$ is zero outside of $[-b_n, b_n]$, we can bound this further by
    \begin{align} \label{displ}
 \eta_n(\bm z) \times \biggl\vert \nabla_{\bm z} \biggl\{\frac{1}{n} \sum_{i= 1}^n K_{b_n}(z_1 - Z_1^{(i)}) K_{b_n}(z_2 - Z_2^{(i)})\biggr\}  \biggr\vert \times O_{a.s.}(a_n),
\end{align}
where $\eta_n(\bm z) := \sup_{y \in [\min\{z_1,z_2\} - b_n, \max\{z_1, z_2\} + b_n]} 1 / \phi(y) = O(1)$ for all $\bm z \in \R^2$. The second term is the absolute value of the gradient of a classical kernel density estimator. Since the derivatives of $\psi$ are continuous and bounded by \ref{C2}, it holds,
    \begin{align*}
  \biggl\vert \nabla_{\bm z} \biggl\{\frac{1}{n} \sum_{i= 1}^n K_{b_n}(z_1 - Z_1^{(i)}) K_{b_n}(z_2 - Z_2^{(i)})\biggr\}  \biggr\vert = \bigl\vert \nabla_{\bm z} \psi(z_1,z_2)\bigr\vert + o_{a.s.}(1),
\end{align*}
see Theorem 9 in \citep{Hansen08}. Plugging this into \eqref{displ} proves our claim. 
\end{proof}

\begin{Lemma} \label{lem:h1}
Under conditions \ref{K1}, \ref{K2}, \ref{C1}, and \ref{C2}  it holds for all $(u,v) \in (0,1)^2$, $\delta \in (0, 0.5]$,  

	\begin{align*}
		\sup_{(u, v) \in [\delta, 1- \delta]^2} \bigl\vert \overline h(u|v) - h(u|v) \bigl\vert  =  O_{a.s.}\bigl(b_n^2 + \sqrt{\ln n / (nb_n)}\bigr).
	\end{align*}
\end{Lemma}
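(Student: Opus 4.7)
My approach is a bias-variance decomposition carried out in the transformed (normal) coordinates introduced in \autoref{sec:not}. Writing $(z_1, z_2) = (\Phi^{-1}(u), \Phi^{-1}(v))$, the set $[\delta, 1-\delta]^2$ is mapped onto a compact region $\mathcal Z_\delta \subset \R^2$ on which $1/\phi(z_2)$ is bounded. I would split
\begin{align*}
\overline h(u|v) - h(u|v) = \bigl\{\overline h(u|v) - \E[\overline h(u|v)]\bigr\} + \bigl\{\E[\overline h(u|v)] - h(u|v)\bigr\}
\end{align*}
and control the two terms uniformly over $\mathcal Z_\delta$.

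For the bias, I would write
\begin{align*}
\E[\overline h(u|v)] = \frac{1}{\phi(z_2)}\iint J_{b_n}(z_1 - s_1)K_{b_n}(z_2 - s_2)\psi(s_1, s_2)\,ds_1\,ds_2,
\end{align*}
substitute $s_2 = z_2 - b_n t_2$, and Taylor-expand $\psi(s_1, z_2 - b_n t_2)$ to second order in the $z_2$-argument. By symmetry of $K$ the first-order term vanishes, and \ref{C2} provides a uniform $O(b_n^2)$ remainder on $\mathcal Z_\delta$. The remaining $s_1$-integration, using that $J_{b_n}(z_1 - \cdot)$ is the distribution function of a variable concentrated within $b_n$ of $z_1$, reduces to $\int_{-\infty}^{z_1}\psi(s_1, z_2)\,ds_1$ up to another $O(b_n^2)$ term under \ref{C2}. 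Reverting via $w_1 = \Phi(s_1)$ to the original coordinates produces exactly $h(u|v)$ as the leading term, and hence $\sup_{\mathcal Z_\delta}\bigl|\E[\overline h] - h\bigr| = O(b_n^2)$.

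For the stochastic term, the decisive observation is that $J_{b_n}$ is uniformly bounded by $1$ while $K_{b_n}$ has sup-norm of order $1/b_n$; consequently each summand of $\overline h$ is bounded by $O(1/b_n)$ and has variance of order $1/b_n$ on $\mathcal Z_\delta$. For fixed $(u,v)$ a Bernstein inequality therefore gives the pointwise rate $O_p(\sqrt{\ln n/(nb_n)})$. To upgrade to uniformity in $(u,v)$, I would invoke an empirical-process argument in the spirit of Theorem~2.3 of \citep{Gine02} (already used in the proof of \autoref{prop:fhat}): the class of functions $(w_1, w_2)\mapsto J_{b_n}(z_1 - \Phi^{-1}(w_1))K_{b_n}(z_2 - \Phi^{-1}(w_2))/\phi(z_2)$ indexed by $(z_1, z_2) \in \mathcal Z_\delta$ is of VC-subgraph type with a polynomial uniform covering number, which yields the almost-sure uniform rate $O_{a.s.}(\sqrt{\ln n/(nb_n)})$.

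The main obstacle is the uniform control of the stochastic term: one must verify the VC-type entropy condition for the $(z_1, z_2)$-indexed class with $b_n$-dependent envelope $O(1/b_n)$ and check the moment conditions required by the Gin\'e--Guillou theorem under \ref{K1}--\ref{K2} and \ref{C2}. The bias calculation is, by contrast, a routine multivariate Taylor expansion, made possible by the boundedness of the second derivatives of $\psi$ on all of $\R^2$. Combining the two bounds then gives the stated rate.
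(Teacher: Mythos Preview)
Your approach is correct and follows the same bias--variance decomposition as the paper. The paper's own proof is more economical in its citations: for the bias it invokes Equations 40--41 of \citep{Hansen04} directly (yielding precisely the $b_n^2\beta(u,v)$ expansion you derive by hand via Taylor expansion of $\psi$), and for the uniform stochastic term it appeals to Lemma~2.2 of \citep{Haerdle88} rather than the Gin\'e--Guillou empirical-process machinery. The H\"ardle result is already tailored to kernel estimators of conditional distribution functions, so it sidesteps the explicit verification of VC-type entropy conditions that your route requires; conversely, your argument is more self-contained and makes transparent why only one bandwidth factor appears in the denominator (the $J_{b_n}$ factor is bounded). Either set of tools delivers the same rate, and the underlying strategy is identical.
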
	\noindent
 \begin{proof}
Equations 40 and 41 in \citep{Hansen04} yield
\begin{align*}
	\E\bigl\{\overline h(u | v)\bigr\} - h\bigr(u | v \bigl) = b_n^2 \beta(u, v) + o(b_n^2),
\end{align*}
for some bias term $\beta(u, v)$ involving $h$ and $\phi$ as well as their first- and second order derivatives. Since all parts are continuous on $[\delta, 1- \delta]^2$ by \ref{C1} for all $\delta \in (0, 0.5]$, it holds
\begin{align*}
	\sup_{(u, v) \in [\delta, 1 - \delta]^2} \bigl\vert \E\bigl\{\overline h(u | v)\bigr\} - h\bigr(u | v)  \bigr\vert = O_{a.s.}\bigl(b_n^2\bigr).
\end{align*}
On the other hand, Lemma 2.2 of \citep{Haerdle88} ensures that
\begin{align*}
	\sup_{(u, v) \in [\delta, 1 - \delta]^2} \bigl\vert \overline h(u | v) - \E\bigl\{\overline h(u | v)\bigr\} \bigr\vert = O_{a.s.}\bigl(\sqrt{\ln n / (nb_n)}\bigr).
\end{align*}
Combining the previous two equations concludes the proof.
\end{proof} \noindent

 \begin{Lemma} \label{lem:h2}
Under conditions \ref{K1}, \ref{K2}, \ref{C1}, and \ref{C2}  it holds for all $(u,v) \in (0,1)^2$, $\delta \in (0, 0.5]$,  
	\begin{align*}
		\sup_{(u, v) \in [\delta, 1- \delta]^2} \bigl\vert \wh h(u|v) - \overline h(u|v) \bigl\vert  =  O_{a.s.}\bigl(a_{n}\bigr).
	\end{align*}
\end{Lemma}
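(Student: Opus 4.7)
The proof mirrors that of \autoref{lem:c}, with $J_{b_n}$ replacing one $K_{b_n}$ factor. Multiplying through by $\phi(z_2)$ and expanding each summand of
\begin{align*}
\phi(z_2)\bigl[\wh h(u|v)-\overline h(u|v)\bigr] = \frac{1}{n}\sum_{i=1}^n\bigl[J_{b_n}(z_1-\wh Z_1^{(i)})K_{b_n}(z_2-\wh Z_2^{(i)}) - J_{b_n}(z_1-Z_1^{(i)})K_{b_n}(z_2-Z_2^{(i)})\bigr]
\end{align*}
to first order around $(Z_1^{(i)},Z_2^{(i)})$, using $J'_{b_n}=K_{b_n}$, produces two leading pieces $T_1 = -\frac{1}{n}\sum_i K_{b_n}(z_1-Z_1^{(i)})K_{b_n}(z_2-Z_2^{(i)})(\wh Z_1^{(i)}-Z_1^{(i)})$ in the $z_1$-direction and $T_2 = -\frac{1}{n}\sum_i J_{b_n}(z_1-Z_1^{(i)})K'_{b_n}(z_2-Z_2^{(i)})(\wh Z_2^{(i)}-Z_2^{(i)})$ in the $z_2$-direction, plus a quadratic Taylor remainder that is $o_{a.s.}(a_n)$ by \eqref{Z_eq}.

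The first piece $T_1$ is controlled exactly as in the proof of \autoref{lem:c}: apply \eqref{Z_eq} to $\wh Z_1^{(i)}-Z_1^{(i)}$, use the compact support of $K_{b_n}$ to bound $1/\phi(Z_1^{(i)})$ on the effective support of the summand by $\eta_n(\bm z)=O(1)$ for $\bm z$ in a compact set, and invoke $\frac{1}{n}\sum_i K_{b_n}(z_1-Z_1^{(i)})K_{b_n}(z_2-Z_2^{(i)})=\psi(\bm z)+o_{a.s.}(1)$ (Theorem 9 of \citep{Hansen08} under \ref{C2}). This yields $T_1=O_{a.s.}(a_n)$ uniformly for $(u,v)\in[\delta,1-\delta]^2$.

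The main obstacle is $T_2$: because $J_{b_n}$ has unbounded support, $Z_1^{(i)}$ cannot be localized, and bounding $|J_{b_n}|\le 1$ and passing to absolute values inside the sum yields only $O_{a.s.}(a_n/b_n)$, since $\frac{1}{n}\sum_i|K'_{b_n}(z_2-Z_2^{(i)})|$ is of order $1/b_n$. The sharper $O_{a.s.}(a_n)$ requires preserving the signed structure. The plan is to substitute $\wh Z_2^{(i)}-Z_2^{(i)}=(\wh W_2^{(i)}-W_2^{(i)})/\phi(Z_2^{(i)})+o_{a.s.}(a_n)$, observe that $\wh W_2^{(i)}-W_2^{(i)}$ is a smooth function of $Z_2^{(i)}$ of uniform order $a_n$ (inherited from \ref{K1} and \ref{M1}), and then apply an integration-by-parts argument exploiting $\int K'(u)\,du=0$: the resulting signed sum has expectation of order $a_n$ and variance of order $a_n^2/(nb_n^3)=o(a_n^2)$ under \ref{K2}, giving $T_2=O_{a.s.}(a_n)$ uniformly on compact sets by standard arguments for kernel derivative estimators. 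Finally, dividing by $\phi(z_2)$, which is bounded below on the compact image of $[\delta,1-\delta]$ under $\Phi^{-1}$, preserves the $O_{a.s.}(a_n)$ rate and completes the proof.
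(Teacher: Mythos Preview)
Your approach follows the paper's template: a first-order Taylor expansion in the pseudo-observations, exactly as in \autoref{lem:c}. The paper's own proof is far terser than yours --- it simply asserts that the argument of \autoref{lem:c} carries over and lands directly on a bound of the form $\sup_{\bm z}\bigl\vert \eta_n(\bm z)/\phi(z_2)\times \nabla_{\bm z} h\{\Phi(z_1)\,|\,\Phi(z_2)\}\bigr\vert \times O_{a.s.}(a_n)$, which is $O(1)$ on compact sets. It does not single out the two directional pieces, and in particular it never invokes any integration-by-parts or variance argument. Your observation that $T_1$ is actually \emph{easier} here than in \autoref{lem:c} (because $J_{b_n}'=K_{b_n}$ yields two undifferentiated kernels, so the sum is just the oracle estimator of $\psi$, bounded even after absolute values) is correct and a nice clarification.

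Your treatment of $T_2$, however, has a gap. The claim that ``$\wh W_2^{(i)}-W_2^{(i)}$ is a smooth function of $Z_2^{(i)}$ of uniform order $a_n$ (inherited from \ref{K1} and \ref{M1})'' holds only in the first tree, where $\wh W_2^{(i)}-W_2^{(i)}=\wh F_{k_e}(X_{k_e}^{(i)})-F_{k_e}(X_{k_e}^{(i)})$ is indeed a deterministic smooth function of a single coordinate. In higher trees, $W_2^{(i)}=U_{k_e|D_e}^{(i)}$ and its estimate depend on \emph{all} coordinates in $\{k_e\}\cup D_e$ through recursively estimated h-functions, so the perturbation is not a function of $Z_2^{(i)}$ alone and the integration-by-parts/variance argument you sketch does not apply. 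A secondary point: your diagnosis slightly misplaces the obstacle. The unbounded support of $J_{b_n}$ is not what hurts --- for $T_2$ we never need to localize $Z_1^{(i)}$, since $\wh Z_2^{(i)}-Z_2^{(i)}$ involves only $1/\phi(Z_2^{(i)})$, and $K_{b_n}'$ does localize $Z_2^{(i)}$. The real difficulty is that once absolute values are taken to bound the $i$-dependent perturbation by $a_n$, the remaining sum $n^{-1}\sum_i |K_{b_n}'(z_2-Z_2^{(i)})|$ is of order $1/b_n$. The paper's route is to keep the signed sum and use that it converges to the bounded derivative of the true object; a rigorous justification of that factorization is where the effort should go, not an integration-by-parts device tied to first-tree structure.
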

\begin{proof}
	With arguments similar to the proof of \autoref{lem:c}, we can show
    \begin{align*}
        		&\sup_{(u, v) \in [\delta, 1- \delta]^2} \bigl\vert \wh h(u|v) - \overline h(u|v) \bigl\vert \\
		&\sup_{\bm z \in [\Phi^{-1}(\delta), \Phi^{-1}(1- \delta)]^2} \bigl\vert \wh h\bigl\{\Phi(z_1)| \Phi(z_2) \bigr\} - \overline h\bigl\{\Phi(z_1)| \Phi(z_2) \bigr\} \bigr\vert \\
        \le &\sup_{\bm z \in [\Phi^{-1}(\delta), \Phi^{-1}(1- \delta)]^2} \biggl\vert \frac{\eta_n(\bm z)}{\phi(z_2)} \times  \nabla_{\bm z} h\bigl\{\Phi(z_1)| \Phi(z_2) \bigr\} \biggr\vert \times O_{a.s.}(a_n), 
	\end{align*}
    where $\eta_n(\bm z) = \sup_{y \in [\min\{z_1,z_2\} - b_n, \max\{z_1, z_2\} + b_n]} 1 / \phi(y)$ and the $O_{a.s}$ term is independent of $\bm z$.  The supremum on the right hand side is $O(1)$ because all functions are continuous in $\bm z$ on every compact subset of $\R^2$. As a result, the right can be bounded by a constant times the $O_{a.s.}(a_n)$ term. This establishes our claim.
\end{proof}


\section{Proof of Proposition 5} \label{Appendix3}

From \autoref{prop:fhat}  and condition (v) in \autoref{prop:an} we get for all $\ell = 1, \dots, d,$ and $x \in \R$, that  $\wh f_\ell(x)  = f_\ell(x) + o_p\{b_{n,c}^2 + (nb_{n,c}^2)^{-1/2}\}$. This implies $(nb_{n,c}^2)^{1/2}\bigl\{\wh f_\ell(x) - f_\ell(x)\bigr\} = o_p(1)$ and we have established that the first $d$ components of \eqref{joint_an} converge to zero in probability. Hence, the first $d$ components of ${\bm \mu}_{\bm x}$ as well as the first $d$ rows and columns of $\Sigma_{\bm x}$ will be zero and we only have to deal with the remaining components in \eqref{joint_an}.

From \eqref{Theory:cproof_eq} and \eqref{Theory:Fproof_eq} in the proof of \autoref{Theory:rate_thm} and \autoref{prop:chat} we furthermore know that $\wh c_{j_e,k_e;D_e}(u,v) = \overline c_{j_e,k_e;D_e}(u,v) + o_p\{b_{n,c}^2 + (nb_{n,c}^2)^{-1/2}\}$ as well as $\wh F_{j_e|D_e}(x_{j_e}|\bm x_{D_e}) = F_{j_e|D_e}(x_{j_e}|\bm x_{D_e}) + o_p\{b_{n,c}^2 + (nb_{n,c}^2)^{-1/2}\}$. Similar to \autoref{lem:h2}, we can now show that
\begin{align*}
	&\overline c_{j_e,k_e;D_e}\bigl\{\wh F_{j_e|D_e}(x_{j_e}|\bm x_{D_e}), \wh F_{k_e|D_e}(x_{k_e}|\bm x_{D_e})\bigr\}\\
    =\;& \overline c_{j_e,k_e;D_e}\bigl\{F_{j_e|D_e}(x_{j_e}|\bm x_{D_e}), F_{k_e|D_e}(x_{k_e}|\bm x_{D_e})\bigr\} + o_p\{b_{n,c}^2 + (nb_{n,c}^2)^{-1/2}\}.
\end{align*}
Hence, for \eqref{joint_an} to hold it suffices to show that 
    \begin{align} \label{joint_an_c}
		(nb_{n, c}^2)^{1/2}\bigl\{\overline{\bm c}^*(\bm x) - b_{n,c}^2  \tilde{\bm\mu}_{\bm x} - {\bm c}^*(\bm x)\bigr\} \stackrel{d}{\to} \mathcal{N}\bigl(0, \tilde \Sigma_{\bm x}\bigr),
	\end{align}
     where
     
    $$\overline{\bm c}^*(\bm x) = \bigl(\overline c_{j_e,k_e;D_e}\{F_{j_e|D_e}(x_{j_e}|\bm x_{D_e}), F_{k_e|D_e}(x_{k_e}|\bm x_{D_e})\}\bigr)_{e \in E_1, \dots, E_{d-1}},$$ 
    and ${\bm c}^*(\bm x)$ is defined similarly, but replacing $\overline c_{j_e,k_e;D_e}$ with $c_{j_e,k_e;D_e}$. 
    
    Define $Z_{j_e|D_e}^{(i)} := \Phi^{-1}(U_{j_e|D_e}^{(i)} )$, $Z_{k_e|D_e}^{(i)} := \Phi^{-1}(U_{k_e|D_e}^{(i)})$,  $z_{j_e|D_e} := \Phi^{-1}\bigl\{F_{j_e|D_e}(x_{j_e}|\bm x_{D_e})\bigr\}$, $z_{k_e|D_e} := \Phi^{-1}\bigl\{F_{k_e|D_e}(x_{k_e}|\bm x_{D_e})\bigr\}$. Let $\bm Y_{n,i} := (Y_{n,i,e})_{e \in E_1, \dots, E_{d-1}}$, be a vector with entries
    \begin{align*}
		Y_{n,i,e} :=(nb_{n,c}^2)^{-1/2} \frac{K\biggl(\frac{Z_{j_e|D_e}^{(i)} - z_{j_e|D_e}}{b_n}\biggr) K\biggl(\frac{Z_{k_e|D_e}^{(i)} - z_{k_e|D_e}}{b_n}\biggr)}{ \phi(z_{j_e|D_e})\phi(z_{k_e|D_e})}.
	\end{align*}
    Then, $\sum_{i=1}^n \bm Y_{n,i} = (nb_{n,c}^2)^{1/2} \overline{\bm c}^*(\bm x)$.
    By the multivariate Lindeberg-Feller central limit theorem (Proposition 2.27 in \citep{vanderVaart98}), \eqref{joint_an_c} holds when  
    \begin{align}
	&\sum_{i=1}^n \E\bigl(\bm Y_{n, i}\bigr) = (nb_{n,c}^2)^{1/2}\bigl\{\bm c^*(\bm x) + b_{n,c}^2\tilde{\bm \mu}_{\bm x} + o(b_{n,c}^2)\bigr\}, \label{LF1} \\
    &\sum_{i=1}^n \cov(\bm Y_{n,i}) \to \tilde \Sigma_{\bm x}, \label{LF2} \\
    &\sum_{i=1}^n \E\bigl\{\| \bm Y_{n,i} \|^2 \ind\bigl(\| \bm Y_{n,i} \| > \varepsilon\bigr) \bigr\} \to 0, \quad \mbox{for all } \varepsilon > 0. \label{LF3}
	\end{align}
    Since $\bm Y_{n,i}$ are independent for $i = 1, \dots, n$, it holds 
   \begin{align*}
	\sum_{i=1}^n \E\bigl(\bm Y_{n, i}\bigr) = n \E\bigl(\bm Y_{n, i}\bigr), \qquad \sum_{i=1}^n \cov(\bm Y_{n,i}) = n \cov(\bm Y_{n,i}). 
\end{align*}
	Denote further $u_{j_e|D_e} := F_{j_e|D_e}(x_{j_e}|\bm x_{D_e})$,  $u_{k_e|D_e} := F_{k_e|D_e}(x_{k_e}|\bm x_{D_e})$. Corollary 3.4 in \citep{Nagler14} gives
    \begin{align} \label{bias_calc}
	n\E\bigl(Y_{n,i,e}\bigr) = (nb_{n,c}^2)^{1/2} \bigl\{c_{j_e,k_e;D_e}(u_{j_e|D_e}, u_{k_e|D_e}) + b_{n,c}^2 \tilde{\mu}_{\bm x, e} + o(b_{n,c}^2)\bigr\},
\end{align}
where 
\begin{align} 
	\tilde{\mu}_{\bm x, e} &:= \biggl\{\frac{\partial^2c_{j_e,k_e;D_e}\bigl(u_{j_e|D_e},u_{k_e|D_e}\bigr)}{\partial u_{j_e|D_e}^2}  \phi^2(z_{j_e|D_e}) + \frac{\partial^2 c_{j_e,k_e;D_e}\bigl(u_{j_e|D_e}, u_{k_e|D_e}\bigr)}{\partial u_{k_e|D_e}^2}  \phi^2(z_{k_e|D_e}) \notag \\
       \begin{split}
	&\phantom{:=\biggl[}- \frac{3 \partial c_{j_e,k_e;D_e}\bigl(u_{j_e|D_e},u_{k_e|D_e}\bigr)}{\partial u_{j_e|D_e}}  \phi(z_{j_e|D_e})z_{j_e|D_e} \\ 
    &\phantom{:=\biggl[} - \frac{3 \partial c_{j_e,k_e;D_e}\bigl(u_{j_e|D_e},u_{k_e|D_e}\bigr)}{\partial u_{k_e|D_e}} \phi(z_{k_e|D_e})z_{k_e|D_e} 
    \end{split} \label{el_mu} \\
	&\phantom{:=\biggl[}  + c_{j_e,k_e;D_e}\bigl(u_{j_e|D_e}, u_{k_e|D_e}\bigr) \times  \bigl(z_{j_e|D_e}^2 + z_{k_e|D_e}^2-2\bigr)  \biggr\} \frac {\sigma_K^2}{2}, \notag 
\end{align}
and $\sigma_K^2 := \int_{[-1, 1]}x^2K(x)dx$. This validates  \eqref{LF1}. By the change of variable $s_1 = (z_1 - z_{j_e|D_e})/b_{n,c}$, $s_2 = (z_2 - z_{k_e|D_e})/b_{n,c}$, and a Taylor approximation of $\psi_{j_e, k_e;D_e}$ (as defined in \ref{C2}), we get
\begin{align}
&  n\E\bigl(Y_{n,i,e}^2\bigr) \phi^2(z_{j_e|D_e})\phi^2(z_{k_e|D_e}) \notag \\
=\; & n\E\biggl\{\frac{1}{nb_{n,c}^2} K^2\biggl(\frac{Z_{j_e|D_e}^{(i)} - z_{j_e|D_e}}{b_{n,c}}\biggr) K^2\biggl(\frac{Z_{k_e|D_e}^{(i)} - z_{k_e|D_e}}{b_{n,c}}\biggr)  \biggr\} \notag  \\
    =\; & \int_\R \int_\R K^2(s_1)K^2(s_2)\psi_{j_e, k_e;D_e}(z_{j_e|D_e} - b_{n,c} s_1, z_{k_e|D_e} - b_{n,c} s_2) ds_1 ds_2 \notag \\
    = \;& \nu_K^2 \psi_{j_e, k_e;D_e}(z_{j_e|D_e}, z_{k_e|D_e}) + o(1), \label{var_calc}
\end{align}
where $\nu_{K} := \int_\R K^2(s)ds$. Using \eqref{var_calc} and \eqref{bias_calc}, we obtain
\begin{align} \label{el_sig}
	n\var(Y_{n, i, e})  
    \to \nu_K^2 \frac{c_{j_e,k_e;D_e}\bigl(u_{j_e|D_e},u_{k_e|D_e}\bigr)}{\phi(z_{j_e|D_e})\phi(z_{k_e|D_e})} =: \tilde \sigma_{\bm x, e}.
\end{align}
Arguments similar to \eqref{var_calc} show that for any two edges $e \neq e^\prime$, it holds $n\E(Y_{n,i,e}Y_{n,i,e^\prime}) = O(b_{n,c})$; and with \eqref{bias_calc}, $n \cov(Y_{n,i,e}, Y_{n,i,e^\prime}) \to 0$. We have shown that \eqref{LF2} holds with $\tilde \Sigma_{\bm x}$ being a diagonal matrix with diagonal entries $\tilde \sigma_{\bm x, e}$ given in \eqref{el_sig}.

Instead of checking the remaining condition \eqref{LF3} directly, we will verify the stronger Lyapunov-type condition $\sum_{i=1}^n \E(\| \bm Y_{n,i} \|^3 ) \to 0$. By Jensen's inequality we get 
    \begin{align*}
    n\E\bigl(\| \bm Y_{n,1} \|^3 \bigr) = nE\biggl\{\biggl(\sum_{m = 1}^{d - 1}\sum_{e \in E_m}  Y_{n,1,e}^2 \biggr)^{3/2}\biggr\} 
    &\le n\sqrt{d(d-1)/2} \sum_{m = 1}^{d - 1}\sum_{e \in E_m} \E\bigl( Y_{n,1,k}^3 \bigr),
\end{align*}
where $d(d-1)/2$ is the number of terms in the double sum. Hence, it suffices to show $n\E( Y_{n,1,e}^3) \to 0$ for any $e \in E_1, \dots, E_{d-1}$. Similar to \eqref{var_calc}, we get $n\E( Y_{n,1,k}^3) =  O\{1/(nb_n^2)^{1/2}\}$ which is $o(1)$. $\qed$

	\bibliography{References}
		
\end{document}